\newcommand{\ketbra}[2]{\ket{#1}\!\bra{#2}}
\DeclareMathOperator{\Tr}{Tr}
\DeclareMathOperator{\dom}{dom}
\DeclareMathOperator{\Span}{Span}
\DeclareMathOperator{\om}{O}
\DeclareMathOperator{\oa}{O_2}
\DeclareMathOperator{\ob}{O_1}
\newtheorem{definition}{Definition}[section]  
\newtheorem{theorem}{Theorem}[section]  
\newtheorem{lemma}{Lemma}[section]
\newcommand{\po}{\hat\Pi}
\newcommand{\id}{\hat\openone}
\newcommand{\Pa}{\hat P^{(2)}}
\newcommand{\poa}{\po^{(2)}}
\newcommand{\pob}{\po^{(1)}}
\newcommand{\m}{\mathcal{C}}
\newcommand{\ma}{\mathcal{C}_2}
\newcommand{\mb}{\mathcal{C}_1}
\newcommand{\pa}{p^{(2)}}
\newcommand{\pb}{p^{(1)}}
\newcommand{\Va}{V^{(2)}}
\newcommand{\Vb}{V^{(1)}}
\newcommand{\pVa}{\pa_j,\Va_j}
\newcommand{\pVb}{\pb_i,\Vb_i}
\newcommand{\coarser}{\hookrightarrow}
\newcommand{\finer}{\hookleftarrow}
\newcommand{\pcoarser}[1]{\lhook\joinrel\xrightarrow{#1}}
\newcommand{\pfiner}[1]{\xleftarrow{#1}\joinrel\rhook}
\begin{document}

\title{Data processing makes POVMs coarser and observational entropies larger}
\author{Adam Teixid\'o-Bonfill}
 \affiliation{Department of Applied Mathematics, University of Waterloo, Waterloo, Ontario, N2L 3G1, Canada}
\affiliation{Institute for Quantum Computing, University of Waterloo, Waterloo, Ontario, N2L 3G1, Canada}
\affiliation{Perimeter Institute for Theoretical Physics, 31 Caroline St N, Waterloo, Ontario, N2L 2Y5, Canada}

\begin{abstract}
We find a criterion to compare POVM measurements and decide which ones can extract more information from physical systems, with coarser POVMs always extracting less information. This criteria generalizes the previous definition of coarser POVM, and is motivated by the idea that information cannot be gained by processing the measurement outcomes. The information that a measurement cannot extract is quantified by observational entropy or coarse-grained entropy. Adequately, coarser measurements have larger observational entropies. Moreover, the characterization and properties of coarser measurements that we provide allow to straightforwardly derive several previously known results about observational entropy.
\end{abstract}

\maketitle

\section{\textbf{Introduction}} 

Our limited capacity to measure physical systems makes all measurements always have some uncertainty. This uncertainty has multiple sources, including experimental error and limited resolution. On top of this, quantum mechanics has unavoidable uncertainties, conjugate quantities such as position and momentum cannot be simultaneously known with perfect accuracy. Quantifying these uncertainties becomes crucial, because uncertainties constrain our predictions about physical systems.

Positive operator-valued measure (POVM) measurements are the best framework to describe the uncertainties that can occur in quantum measurements. POVM measurements are the most general type of measurements and extend the common projective measurements~\cite{nielsen_chuang_2010}. Precisely, projective measurements assume that measurement apparatus are completely reliable. On the contrary, POVM measurements incorporate in their formalism uncertainties and errors. Moreover, POVM measurements naturally arise when measuring systems through probes that interact with them. In general, measuring a quantum system modifies it, but the POVMs do not specify how. The POVMs together with the updates of the states are called generalized measurements, or quantum instruments as well \cite{Davies1970,Pellonp__2012}. 

Coarse-grainings capture an ubiquitous source of uncertainty, having limited resolution. The concept of coarse-graining appears in multiple branches of physics under different guises. These branches go from consistent histories quantum mechanics \cite{Consistent1993,Consistent1996,Consistent2019}, to fluid dynamics \cite{fluids1997,fluids2017}, black holes \cite{black2019,black2021} and the renormalization group \cite{renormalization1998,renormalization2007}. Moreover, several entropies incorporate a notion of coarse-graining, including the entropy of an observable \cite{observable1962,observable1984,observable1997,observable2017,observable2019}, Kolmogorov-Sinai entropy \cite{doyne1982,PhysRevLett.82.520,kolmogorov2004,jost2005dynamical} and topological entropy \cite{doyne1982,jost2005dynamical}. Coarse-grainings consist of only having access to limited features of a system, e.g. whether an electron is on the right or the left of a box. This limitation often makes several states indistinguishable. Particularly, measurements can be coarse-grained, as done to define coarse-grained entropy \cite{vonNeumann,wehrl1978,black2021} or observational entropy \cite{safranek2019thermRapid,safranek2019therm,brief2021}. This entropy quantifies the uncertainty perceived by an observer that can only perform a limited set of measurements. Observational entropy recently arose as a promising candidate for thermodynamic entropy \cite{PRXQuantum.2.030202,safranek2019thermRapid,safranek2019therm,strasberg2019,classical2020}, while the origin of this entropy dates back to as early as the first half of the previous century, when introduced by Wigner and von Neumann \cite{vonNeumann2010}.

The amount of coarse-graining varies among measurements, with lower resolution measurements being more coarse-grained, or coarser. Coarser measurements should distinguish less among states and correspondingly have a higher observational entropy. Being coarser has been defined for projective measurements \cite{safranek2019therm,PRXQuantum.2.030202}, connecting them to the idea of coarser partitions. Afterwards, the notion of coarseness was brought to POVM measurements \cite{safranek2021information}. 

In this work, we provide a notion of coarser POVM measurements that generalizes the previous notion. This generalization comes from realizing the following: processing the outcomes of measurements never increases the information that measurements gather. Therefore, the processed data can be considered the outcomes of a new, coarser measurement. Then, this connection to processing allows to promptly derive the properties of coarser measurements using well-known data processing inequalities. Crucially, observational entropy does become larger for coarser measurements. Moreover, all measures of information that satisfy data processing inequalities become monotonic, including relative entropy, Rényi entropies and mutual information. Therefore, the coarseness of measurements indicates their uncertainty independently of how the unknown information is quantified.

This article aims to present the generalized definition of coarser POVM measurements and the major properties of the definition, including its connection to data processing. Section \ref{sec:background} formally describes generalized measurements, POVMs, observational entropy, and data processing inequalities. Then, section \ref{sec:CoarserAndProcessing} defines coarser POVM, connecting the definition to data processing and proving properties of the definition. Moreover, the concept of coarseness is further extended to describe POVMs that are coarser only in subspaces of states. Afterwards, section \ref{sec:obsEntropyProp} uses coarser measurements to quickly prove known important properties of observational entropy. The article concludes connecting the main results to other areas of physics.

\section{\textbf{Background concepts}}
\label{sec:background}

This section starts describing generalized measurements and then summarizes how to extend observational entropy to generalized measurements as proposed by Šafránek and Thingna \cite{safranek2021information}. Afterwards, the text connects observational entropy to widespread informational entropies, providing their interpretations. Finally, this section finishes modelling data processing as Markov chains. This modelling allows to show that processing increases observational entropy and decreases mutual information. These results translate to properties of coarser measurements in section \ref{sec:CoarserAndProcessing}.

\textit{\textbf{Generalized measurements.-}}
Also called quantum instruments, these measurements are the most general quantum measurements. Generalized measurements are described by a set of quantum operations $\{\mathcal{A}_i\}$ such that $\sum_i \Tr[\mathcal{A}_i(\hat\rho)] = 1$ for all system states $\hat\rho$. These $\mathcal{A}_i$ can always be decomposed as
\begin{equation}
    \mathcal{A}_i(\hat\rho) = \sum_m \hat K_{im} \hat\rho \hat K_{im}^\dagger,
\end{equation}
where $\{\hat K_{im}\}$ are linear operators called Kraus operators. These operators fulfill
\begin{equation}
    \sum_{i, m } \hat K_{im}^\dagger \hat K_{im} = \id,
\end{equation}
which is equivalent to $\sum_i \Tr[\mathcal{A}_i(\hat \rho)] = 1, \  \forall\hat\rho$. The values of $i$ are the outcomes, which occur with probability
\begin{equation}
    p_i = \Tr[\mathcal{A}_i(\hat \rho)] = \Tr[\po_i\hat\rho].
\end{equation}
Here, $\po_i = \sum_m\hat K_{im}^\dagger \hat K_{im}$ are positive operators, which form a positive operator-valued measure (POVM). After obtaining $i$, the state is updated according to
\begin{equation}
    \hat\rho \overset{i}{\longrightarrow} \frac{\mathcal{A}_i(\hat\rho)}{p_i}.
\end{equation}
This update affects the results of any measurements performed afterwards. Importantly, the $\po_i$ do not specify the post-measurement state, because the decomposition of the $\po_i$ in terms of Krauss operators is not unique. Nonetheless, the $\po_i$ are enough to define coarser measurements, because the $\po_i$ already specify the measurement statistics. Kraus operators are only used in section \ref{sec:obsEntropyProp}, to recover properties of observational entropy from properties of coarser measurements. Finally, projective measurements are a particular case of generalized measurements. Measurements are projective when the $\po_i$ are projectors $\hat P_i$, and each projector has a single Kraus operator $\hat K_i = \hat P_i$.

\textit{\textbf{Observational entropy and its connection to relative entropy.-}} The observational entropy of a state $\hat\rho$ depends on which generalized measurement $\m$ is available or relevant. Namely,
\begin{equation}
    S_{\m}(\hat\rho) = \sum_i p_i [\ln{V_i} - \ln{p_i}],\label{eq:SCrho}
\end{equation}
where $p_i=\Tr[\hat\rho \po_i]$ and $V_i = \Tr \po_i$. This definition was proposed by Šafránek and Thingna~\cite{safranek2021information}, and extends the earlier definition of observational entropy for multiple projective measurements \cite{safranek2019thermRapid,safranek2019therm}, while retaining the important properties reviewed in section~\ref{sec:obsEntropyProp}.

The $V_i$ are the volumes of the subspaces of the outcomes $i$ and quantify how often $i$ is measured on random states. 
Concretely,
\begin{equation}
    V_i = \Tr[\po_i \hat\rho^{\text{id}}]V_\mathcal{H} = p_i^{\text{id}}V_\mathcal{H}, \label{eq:volId}
\end{equation}
where $p_i^{\text{id}}$ is the probability of measuring $i$ when the state is uniformly random, $\hat\rho^{\text{id}}=\id/V_\mathcal{H}$, and $V_\mathcal{H}=\dim{\mathcal{H}}$ is the volume of the Hilbert space. This relation justifies that $V_i$ behave as probabilities, which is crucial to show that $S_\m$ is larger for coarser measurements in the next section.

The following paragraphs connect $S_{\m}$ to well known information theory quantities. This connection will prove useful afterwards. The first quantity is the von Neumann entropy \mbox{$S_{\text{vN}}(\hat\rho) = -\Tr[\hat\rho\ln\hat\rho]$}, which measures the unknown information in the system for an observer who knows $\hat\rho$. When $\m$ is a projective measurement with projectors $\hat P_i$, then
\begin{equation}
    S_{\m}(\hat\rho) = S_{\text{vN}}\left(\sum_{i} p_i \frac{\hat P_i}{V_i}\right). \label{eq:interpret}
\end{equation}
This identity and similar ones were key to derive the second law of thermodynamics with observational entropy \cite{strasberg2019,PRXQuantum.2.030202}. Moreover, this equation allows to interpret observational entropy when $\m$ is projective: $S_{\m}(\hat\rho)$ quantifies the unknown information from the perspective of an observer who knows the $p_i$ and $\hat P_i$ but not the state $\hat\rho$. This observer could obtain the $p_i$ by performing their only available measurement $\m$ on infinite copies of $\hat\rho$. Then, of all the states with the same $p_i$, $\hat\rho_{\text{est}}=\sum_{i} p_i \hat P_i/V_i$ has the largest entropy. Therefore, $\hat\rho_{\text{est}}$ is the observer's best estimation of $\hat\rho$, by the principle of maximum entropy. Finally, $S_{\text{vN}}(\hat\rho_{\text{est}})$ is the unknown information in $\hat\rho_{\text{est}}$, and $S_{\text{vN}}(\hat\rho_{\text{est}})=S_{\m}(\hat\rho)$ justifies the given interpretation of observational entropy. However, neither this interpretation nor Eq.~\eqref{eq:interpret} extend to generalized measurements. This claim is shown in appendix \ref{apx:failedvNrel}, prompting that the interpretation of $S_\m$ has to be modified for general $\m$.

For generalized measurements and finite dimensional Hilbert spaces, an alternative to Eq.~\eqref{eq:interpret} is
\begin{align}
    S_{\m}(\hat\rho) &= \sum_i p_i [\ln(p_i^{\text{id}}V_\mathcal{H}) - \ln{p_i}]\nonumber\\
     &= \ln V_\mathcal{H} - \sum_i p_i [\ln{p_i}-\ln p_i^{\text{id}}]\nonumber\\
     &= S_{\text{vN}}(\hat\rho^{\text{id}}) - D_{\text{KL}}\left[p_i || p_i^{\text{id}}\right], \label{eq:interpretacioinfo}
\end{align}
which was already provided in \cite{safranek2019thermRapid,safranek2019therm}.
Here, $D_{\text{KL}}$ denotes the Kullback-Leibler divergence, also called relative entropy, 
\begin{equation}
    D_{\text{KL}}\left[p_i || q_i\right] = \sum_i p_i (\ln{p_i} - \ln{q_i}).
\end{equation}
This divergence quantifies the extra information needed to transmit $i$ sampled with probabilities $p_i$, due to communicating with a code optimized for probabilities $q_i$ instead. These $q_i$ usually represent the best available model for the probabilities. Therefore, $D_{\text{KL}}$ measures how different the actual $p_i$ are from the reference $q_i$, as supported by the Gibbs' inequality, 
\begin{equation}
    D_{\text{KL}}\left[p_i || q_i\right]\geq 0, \quad D_{\text{KL}}\left[p_i || q_i\right]=0 \iff p_i=q_i,\  \forall i. \label{eq:gibbs}
\end{equation}
The meanings of $S_{\text{vN}}$ and $D_{\text{KL}}$ provide a modified interpretation of $S_\m$ valid for general $\m$: $S_\m(\hat\rho)$ quantifies the unknown information for an observer who only knows $p_i$ and $V_i$ and only considers $\m$ relevant. This observer ignores everything else about the $\po_i$ and $\hat\rho$. Therefore, this observer cannot restrict in any way the density matrix of the state and must model it as $\hat\rho^{\text{id}}$. However, the observer does have information about the state, since the observer knows that the outcomes of $\m$ will be distributed with probabilities $p_i$ instead of $p_i^{\text{id}}=V_i/V_{\text{tot}}$. Then, looking at the terms in Eq.~\eqref{eq:interpretacioinfo},
\begin{itemize}
    \item $S_{\text{vN}}(\hat\rho^{\text{id}})$ is the information that the observer ignores about the density matrix of the system. 
    \item $D_{\text{KL}}\left[p_i || p_i^{\text{id}}\right]$ is the information that the observer can avoid sending when they transmit the outcomes of $\m$, using that they know the probabilities $p_i$.
\end{itemize}
The net unknown information for the observer, for whom only $\m$ is relevant, is precisely \mbox{$S_{\text{vN}}(\hat\rho^{\text{id}})-D_{\text{KL}}\left[p_i || p_i^{\text{id}}\right]=S_\m(\hat\rho)$}, justifying the previously stated interpretation.


\textit{\textbf{Data processing model and inequalities.-}} The last task in this section is to model data processing as a Markov chain. Consider two random variables $I$ and $J$; for instance, the output of two different measurements. If these variables form a Markov chain, denoted by $I \to J$, then the conditional probabilities $p_{j|i}$ are fixed. These lowercase $i$ and $j$ respectively indicate any value that $I$ and $J$ can take. The conditional probabilities form a transition matrix $P_{ji} = p_{j|i}$, which is left stochastic, meaning
\begin{equation}
    \sum_j P_{ji} = 1,\ P_{ji}\geq 0.
\end{equation}
Importantly, the probabilities $p_i$ of $I$ are variable and choosing them fixes the probabilities $p_j$ of $J$. This relationship spontaneously appears if the $j$ are produced by a black box that only has access to $i$ and a source of randomness. This setup is visualized in figure \ref{fig:DataProc}, illustrating why $I\to J$ can be understood as describing data processing.
\begin{figure}[ht]
    \centering
    \includegraphics[width=6cm]{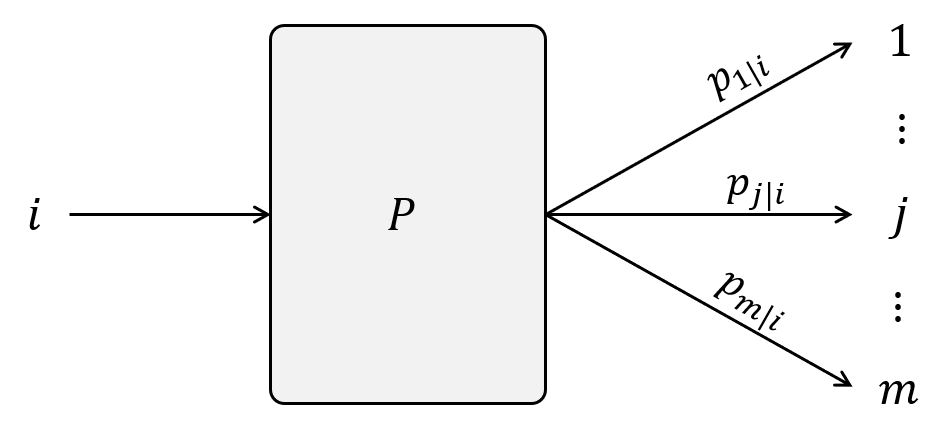}
    \caption{Black box example of data processing. The transition matrix $P_{ji} = p_{j|i}$ is enough to specify the probabilities of the outputs $j \in \{1,\ldots,m\}$, for any given input $i$. }
    \label{fig:DataProc}
\end{figure}

Moreover, this model can be further extended to include the measured physical system. The probabilistic state of the system is then represented by a third random variable $X$. The value taken by $X$ must determine the outcome of any measurement $Y$, therefore $X\to Y$. Afterwards, if $Y$ is processed to obtain $Z$ without measuring the system again, then $p_{xyz} = p_{z|y}p_{xy}$. Altogether, this setup forms a Markov chain of three variables, $X \to Y \to Z$. For example, in quantum systems, the values of $X$ can be the the eigenvectors of $\hat \rho = \sum_x p_x \ketbra{x}{x}$. Then, $p_{y|x} = \bra{x}\po^{(Y)}_y\ket{x}$ and the $p_{z|y}$ characterize how $Y$ is processed to obtain $Z$, in the same manner already shown in figure \ref{fig:DataProc}.

Next, several inequalities about processing are proven relying on the relation between processing and Markov chains. Afterwards, these inequalities will be used to show fundamental properties of coarser measurements.
\begin{theorem} {\normalfont (Data processing inequality for the $D_{\text{KL}}$)} Consider a Markov chain $I\to J$ and two probability distributions $p_i$ and $q_i$, then
\begin{equation}
    D_{\text{KL}}[p_i||q_i] \geq D_{\text{KL}}[p_j||q_j]. \label{eq:dklMonotonous}
\end{equation}
The equality is achieved if and only if $p_{ij} = q_{i|j}p_j$.
\label{thm:dataProcStr}
\end{theorem}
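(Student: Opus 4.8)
The plan is to prove the inequality via the chain rule for relative entropy, which also delivers the equality condition. First I would introduce joint distributions on the pair $(I,J)$, namely $p_{ij}=P_{ji}\,p_i$ and $q_{ij}=P_{ji}\,q_i$. Both share the same conditional $p_{j|i}=q_{j|i}=P_{ji}$, fixed by the Markov chain $I\to J$, and their $J$-marginals are exactly the $p_j=\sum_i P_{ji}p_i$ and $q_j=\sum_i P_{ji}q_i$ appearing in~\eqref{eq:dklMonotonous}.

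Next I would expand $D_{\text{KL}}[p_{ij}||q_{ij}]=\sum_{ij}p_{ij}(\ln p_{ij}-\ln q_{ij})$ in two ways, using $p_{ij}=p_{j|i}p_i=p_{i|j}p_j$ and the analogous identities for $q$. Factoring as $p_{j|i}p_i$ gives
\begin{equation}
    D_{\text{KL}}[p_{ij}||q_{ij}]=D_{\text{KL}}[p_i||q_i]+\sum_i p_i\,D_{\text{KL}}[p_{j|i}||q_{j|i}],
\end{equation}
and the last sum vanishes because $p_{j|i}=q_{j|i}$. Factoring instead as $p_{i|j}p_j$ gives
\begin{equation}
    D_{\text{KL}}[p_{ij}||q_{ij}]=D_{\text{KL}}[p_j||q_j]+\sum_j p_j\,D_{\text{KL}}[p_{i|j}||q_{i|j}].
\end{equation}
Equating the two expressions yields $D_{\text{KL}}[p_i||q_i]=D_{\text{KL}}[p_j||q_j]+\sum_j p_j\,D_{\text{KL}}[p_{i|j}||q_{i|j}]$.

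Then the inequality follows from Gibbs' inequality~\eqref{eq:gibbs}: each $D_{\text{KL}}[p_{i|j}||q_{i|j}]\geq 0$, so the correction term is nonnegative and $D_{\text{KL}}[p_i||q_i]\geq D_{\text{KL}}[p_j||q_j]$. For equality the correction must vanish; since it is a sum of nonnegative terms weighted by $p_j$, this forces $D_{\text{KL}}[p_{i|j}||q_{i|j}]=0$ for every $j$ with $p_j>0$, hence $p_{i|j}=q_{i|j}$ on the support of $p_j$ by the equality case of~\eqref{eq:gibbs}, i.e. $p_{ij}=p_{i|j}p_j=q_{i|j}p_j$, which is the stated condition.

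The step I expect to be most delicate is the bookkeeping with vanishing probabilities: one must check the chain-rule decompositions remain valid when some $p_i,p_j,q_i,q_j$ are zero, using the conventions $0\ln 0=0$ and $D_{\text{KL}}=\infty$ when absolute continuity fails, and interpret the equality statement $p_{ij}=q_{i|j}p_j$ only where $p_j>0$. A self-contained alternative that sidesteps the chain rule is to apply the log-sum inequality to $\sum_i(P_{ji}p_i)\ln\!\big((P_{ji}p_i)/(P_{ji}q_i)\big)$ for each fixed $j$ and sum over $j$; since $\ln\!\big((P_{ji}p_i)/(P_{ji}q_i)\big)=\ln(p_i/q_i)$ and $\sum_j P_{ji}=1$, this reproduces~\eqref{eq:dklMonotonous} at once, with equality iff the ratio $p_i/q_i$ is constant over all $i$ that contribute to a given $j$, which again rearranges to $p_{ij}=q_{i|j}p_j$.
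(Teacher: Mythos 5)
Your proof is correct and follows essentially the same route as the paper's: both identify $D_{\text{KL}}[p_i||q_i]$ with the joint divergence $D_{\text{KL}}[p_{ij}||q_{ij}]$ using $p_{j|i}=q_{j|i}$ from the Markov chain, re-factor it as $D_{\text{KL}}[p_j||q_j]+\sum_j p_j D_{\text{KL}}[p_{i|j}||q_{i|j}]$, and invoke Gibbs' inequality for both the bound and the equality case $p_{ij}=q_{i|j}p_j$. The explicit two-way chain-rule bookkeeping and the log-sum alternative you mention are presentational additions rather than a different argument.
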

This inequality, proven in appendix \ref{apx:dpistr}, implies that processing cannot make probability distributions more distinguishable. This result is well known, but takes different names, even having been referred to as a second law of thermodynamics \cite{elements}. The justification for this name is that relative entropy never increases along Markov chains. Moreover, these Markov chains could represent physical processes progressing. Meanwhile, entropy surprisingly can both increase and decrease along a Markov chain. Interestingly, Eq.~\eqref{eq:dklMonotonous} holds for the general class of Rényi relative entropies \cite{Renyi}. In addition, the inequality can be modified to apply for \mbox{f-divergences} \cite{fdivergences} or to obtain a quantum version, which is known as monotonicity of the relative entropy \cite{Lindblad1975,Uhlmann1977}. These extensions are not explored in this article for conciseness. However, these extensions could be translated to properties of coarser measurements, similarly to what is done in section \ref{sec:CoarserAndProcessing} for observational entropy. 

The next inequality states that observational entropy cannot decrease with processing, even using the definition
\begin{equation}
    S_{\text{obs}}(p_i,V_i) = \sum_i p_i [\ln V_i - \ln p_i].
    \label{eq:obspV}
\end{equation}
This definition does not only apply to quantum measurements, but also to classical measurements \cite{wehrl1978,PhysRevLett.82.520,doi:10.1119/1.1632488,doi:10.1063/1.2907731,upanovi2018}, where the definition has been considerably explored \cite{Kozlov2007,classical2020}.
\begin{theorem} {\normalfont (Observational entropy does not decrease along Markov chains)} Consider
\begin{align}
    p_j &= \sum_i P_{ji} p_i,\
    V_j = \sum_i P_{ji} V_i,
    \label{eq:probrel}
\end{align}
with $P$ left stochastic. Then,
\begin{equation}
    S_{\text{obs}}(p_i,V_i) \leq S_{\text{obs}}(p_j,V_j), \label{eq:sobsMonotonous}
\end{equation}
with equality if and only if  $P_{ji}p_{i}V_j=P_{ji}V_ip_j$. 

Moreover, Eq.~\eqref{eq:probrel} is equivalent to having $I\to J$ with transition matrix $P$ and the $V_i$ transforming as the $p_i$.
\label{thm:observationalmonotonous}
\end{theorem}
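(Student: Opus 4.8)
The plan is to reduce the statement to the data processing inequality for $D_{\text{KL}}$ (Theorem~\ref{thm:dataProcStr}) by normalizing the volumes into a probability distribution. Set $V_{\text{tot}} = \sum_i V_i$ and $q_i = V_i/V_{\text{tot}}$; since the $V_i$ are nonnegative and $V_{\text{tot}}>0$, the $q_i$ form a bona fide probability distribution. The key observation is the algebraic identity
\begin{equation}
    S_{\text{obs}}(p_i,V_i) = \sum_i p_i\bigl[\ln(q_i V_{\text{tot}}) - \ln p_i\bigr] = \ln V_{\text{tot}} - D_{\text{KL}}\!\left[p_i \,\|\, q_i\right],
\end{equation}
which is exactly the manipulation already used to obtain Eq.~\eqref{eq:interpretacioinfo}, now applied with the reference distribution $q_i$ in place of $p_i^{\text{id}}$.

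Next I would check that processing leaves $V_{\text{tot}}$ invariant: because $P$ is left stochastic, $\sum_j V_j = \sum_j\sum_i P_{ji}V_i = \sum_i V_i\sum_j P_{ji} = V_{\text{tot}}$. Consequently $q_j = V_j/V_{\text{tot}} = \sum_i P_{ji} q_i$, i.e.\ the normalized volumes transform under $P$ exactly as the $p_i$ do. Thus $p$ and $q$ are two probability distributions propagated along the Markov chain $I\to J$ with transition matrix $P$, and Theorem~\ref{thm:dataProcStr} gives $D_{\text{KL}}[p_i\|q_i] \geq D_{\text{KL}}[p_j\|q_j]$. Combining this with the identity above (and the invariance of $\ln V_{\text{tot}}$) yields
\begin{equation}
    S_{\text{obs}}(p_i,V_i) = \ln V_{\text{tot}} - D_{\text{KL}}[p_i\|q_i] \leq \ln V_{\text{tot}} - D_{\text{KL}}[p_j\|q_j] = S_{\text{obs}}(p_j,V_j).
\end{equation}
For the equality condition, I would take the equality clause of Theorem~\ref{thm:dataProcStr}, namely $p_{ij} = q_{i|j}p_j$, and unfold it with the joint distributions $p_{ij}=P_{ji}p_i$ and $q_{ij}=P_{ji}q_i$, so $q_{i|j}=P_{ji}q_i/q_j$. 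This reads $P_{ji}p_i q_j = P_{ji}q_i p_j$; multiplying through by $V_{\text{tot}}$ and using $q_i V_{\text{tot}} = V_i$ gives precisely $P_{ji}p_i V_j = P_{ji}V_i p_j$, as claimed.

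Finally, the ``moreover'' clause is essentially a restatement of definitions: a left-stochastic matrix $P$ is exactly the transition matrix of some Markov chain $I\to J$, for which $p_j = \sum_i p_{j|i}p_i = \sum_i P_{ji}p_i$; saying ``the $V_i$ transform as the $p_i$'' is by definition the second relation in Eq.~\eqref{eq:probrel}, and conversely Eq.~\eqref{eq:probrel} with $P$ left stochastic delivers both. I do not expect a genuine obstacle in this argument; the only points requiring care are the degenerate cases (vanishing $p_i$ or $V_i$, handled by the usual $0\ln 0 = 0$ convention) and verifying that $q_i$ is a legitimate probability distribution so that Theorem~\ref{thm:dataProcStr} actually applies, which I would state explicitly rather than leave implicit.
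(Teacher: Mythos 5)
Your argument is essentially identical to the paper's own proof: the paper likewise writes $S_{\text{obs}}(p_i,V_i)=\ln V_{\text{tot}}-D_{\text{KL}}[p_i\|p_i^{\text{id}}]$ with $p_i^{\text{id}}=V_i/V_{\text{tot}}$ (your $q_i$), applies Theorem~\ref{thm:dataProcStr} to the two distributions propagated by the same transition matrix, and unfolds the equality clause to obtain $P_{ji}p_iV_j=P_{ji}V_ip_j$. The only thing you omit is the case of infinite $V_{\text{tot}}$, where the normalization $q_i=V_i/V_{\text{tot}}$ breaks down; the paper covers this separately via a direct Jensen's-inequality argument in Appendix~\ref{apx:InfiniteVtot}.
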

The requirement that volumes transform as probabilities is sensible, because $V_i/V_{\text{tot}}=p_i^{\text{id}}$, where \mbox{$V_\text{tot} = \sum_i V_i$} and $p_i^{\text{id}}$ are the probabilities for the uniformly random state, analogously to Eq.~\eqref{eq:volId}. The theorem is proven in appendix \ref{apx:dpiobs} and shows that when $p_i$ and $p^{\text{id}}_i$ evolve under the same Markov chain, observational entropy never decreases. Note that the converse does not hold, one might encounter a pair $p_i$ and $p^{\text{id}}_i$ for which observational entropy does not decrease, but the process that occurs is not a Markov chain. The adequate counterexamples are provided in appendix \ref{apx:counterexampleConverse}.

The last important inequality is often referred to as the data processing inequality \cite{elements}. This inequality concerns the mutual information of a joint probability distribution $p_{xy}$,
\begin{equation}
    I(p_{xy}) = \sum_{x, y} p_{xy} \ln{\frac{p_{xy}}{p_x p_y}},
\end{equation}
where $p_x$ and $p_y$ are the marginals of $p_{xy}$.

\begin{theorem} {\normalfont(Data processing inequality)} Consider the Markov chain $X\to Y\to Z$. Then,
\begin{equation}
    I(p_{xy}) \geq I(p_{xz}).
\end{equation}
The equality is achieved if and only if  $p_{xyz} = p_{y|z}p_{xz}$.
\label{thm:DPIMI}
\end{theorem}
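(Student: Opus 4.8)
The plan is to deduce this from Theorem~\ref{thm:dataProcStr}, the data processing inequality for the relative entropy, rather than redeveloping conditional mutual information from scratch. The starting observation is that mutual information is a relative entropy: $I(p_{xy}) = D_{\text{KL}}[p_{xy}\,||\,p_x p_y]$, where the two distributions being compared, the joint $p_{xy}$ and the product of its marginals $p_x p_y$, are regarded as distributions over the single variable $(X,Y)$.

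Next I would build the Markov chain to which Theorem~\ref{thm:dataProcStr} is applied. Take the input variable to be $(X,Y)$ and the output variable to be $(X,Z)$, with transition matrix $P_{(x',z),(x,y)} = \delta_{x'x}\,p_{z|y}$; this is left stochastic, since $\sum_{x',z}\delta_{x'x}p_{z|y} = \sum_z p_{z|y} = 1$. The idea is that this channel copies the $X$-slot and sends $y$ to $z$ via the fixed conditional $p_{z|y}$, using no information about $x$. Feeding it the joint distribution $p_{xy}$ yields $\sum_y p_{z|y}p_{x'y} = \sum_y p_{x'yz} = p_{x'z}$, where the first equality uses the Markov property $p_{z|xy}=p_{z|y}$ built into $X\to Y\to Z$; feeding it the product $p_x p_y$ yields $p_{x'}\sum_y p_{z|y}p_y = p_{x'}\sum_y p_{yz} = p_{x'}p_z$. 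Theorem~\ref{thm:dataProcStr}, applied to these two distributions and this transition matrix, then gives directly
\begin{equation}
    I(p_{xy}) = D_{\text{KL}}[p_{xy}\,||\,p_x p_y] \;\geq\; D_{\text{KL}}[p_{xz}\,||\,p_x p_z] = I(p_{xz}).
\end{equation}

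For the equality case I would transcribe the equality condition of Theorem~\ref{thm:dataProcStr}, $p_{ij}=q_{i|j}p_j$, into the present notation. The joint of input and output under $p$ is $\delta_{x'x}p_{z|y}p_{xy}$, which by the Markov property equals $\delta_{x'x}p_{xyz}$; the analogous conditional under the product distribution is, by Bayes' rule, $\delta_{x'x}p_{z|y}p_x p_y/(p_{x'}p_z) = \delta_{x'x}p_{y|z}$; and the output marginal is $p_{x'z}$. Substituting, the condition $p_{ij}=q_{i|j}p_j$ collapses to $p_{xyz}=p_{y|z}p_{xz}$, as claimed. The only slightly delicate point is the reduction itself: one must check that the single channel acting through the $Y$-coordinate transports $p_{xy}$ to $p_{xz}$ and $p_x p_y$ to $p_x p_z$ simultaneously, and one must keep track of which Bayes inversion is being used when unwinding the equality condition. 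Neither is a genuine obstacle; it is bookkeeping on top of Theorem~\ref{thm:dataProcStr}.
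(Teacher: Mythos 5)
Your proof is correct, and it reduces the statement to Theorem~\ref{thm:dataProcStr} by a genuinely different decomposition than the paper's. The paper writes $I(p_{xy})=\sum_x p_x D_{\text{KL}}(p_{y|x}||p_y)$ and $I(p_{xz})=\sum_x p_x D_{\text{KL}}(p_{z|x}||p_z)$, and applies Theorem~\ref{thm:dataProcStr} once for every fixed $x$ to the pair $p_{y|x}$, $p_y$, both evolved by the same channel $p_{z|y}$ (legitimate because $X\to Y\to Z$ gives $p_{yz|x}=p_{z|y}p_{y|x}$); the resulting family of equality conditions $p_{yz|x}=p_{y|z}p_{z|x}$, $\forall p_x\neq 0$, is then multiplied by $p_x$ to obtain $p_{xyz}=p_{y|z}p_{xz}$. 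You instead use $I(p_{xy})=D_{\text{KL}}[p_{xy}||p_xp_y]$ and invoke Theorem~\ref{thm:dataProcStr} a single time on the enlarged alphabet, with the channel $\delta_{x'x}p_{z|y}$ that copies $X$ and processes $Y$. The two checks you flag as delicate are exactly the right ones and both go through: the channel carries $p_{xy}\mapsto p_{xz}$ and $p_xp_y\mapsto p_xp_z$ (the Markov property enters in the first of these, in the same place it enters the paper's argument), and the Bayes inversion gives $q_{i|j}=\delta_{x'x}p_{y|z}$, so the equality condition collapses to $p_{xyz}=p_{y|z}p_{xz}$ as required. What your route buys is a single application of the relative-entropy inequality and a form that transfers directly to the quantum setting, where the analogous statement is monotonicity of relative entropy under a channel acting on one subsystem; what the paper's route buys is that it never constructs a channel on the product alphabet, at the price of bookkeeping a per-$x$ family of equality conditions. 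Both arguments are equally casual about zero-probability conditionals in the equality analysis, so neither is more rigorous than the other on that point.
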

This theorem, proven in appendix \ref{apx:dpimi},  entails that processing the outcomes of measurements cannot increase the information that these measurements already extracted. To see this implication, consider modelling the state of the system as $X$, the measurement outcomes as $Y$ and the processed measurement outcomes as $Z$. This situation corresponds to $X\to Y\to Z$, as we previously discussed when providing a data processing model. Then, the information that $Z$ conveys about $X$ cannot be larger than the information that $Y$ conveys about $X$, due to the data processing inequality.

The next section uses the two inequalities that concern observational entropy and mutual information, through connecting coarser to data processing.

\section{Generalizing the definition of coarseness in quantum mechanics}
\label{sec:CoarserAndProcessing}

The notion of coarser measurements in quantum mechanics comes from projective measurements. For these measurements, the notion of coarser can be traced back to the common idea of coarser partitions. Specifically, the subspaces associated to the projectors form partitions of the Hilbert space. Then, projective measurements are coarser if their corresponding partitions are coarser. However, most generalized measurements are not in correspondence with partitions, providing an opportunity to extend the definition of coarser. Šafránek and Thingna~\cite{safranek2021information} previously provided a possible extension: $\ma$ is coarser than $\mb$ if their $\{\poa_j\}$ and $\{\pob_i\}$ fulfill
\begin{equation}
    \poa_j = \sum_{i\in I_{j}} \pob_i,\label{eq:coarserold}
\end{equation}
where $I_{j}$ are disjoint sets of outcomes and $\cup_j I_{j} = \{i\}$, with $ \{i\}$ indicating the set of all possible $\mb$ outcomes. Note that superindices will be used throughout the text to specify the measurement. 
The above equation implies that the $\{\poa_j\}$ can be built in terms of the $\{\pob_i\}$. Nonetheless, there are more ways in which the $\{\poa_j\}$ could be built. These ways are included in the generalization below, which is the central idea of the present work.
\begin{definition} {\normalfont (Coarser measurement)} Consider two generalized measurements $\ma$ and $\mb$. Then, $\ma$ is coarser than $\mb$, denoted by $\ma \coarser \mb$, when
\begin{align}
     \poa_j &= \sum_i P_{ji}\ \pob_i,\label{eq:coarser}
\end{align}
where $P$ must be a left stochastic matrix, i.e. $\sum_j P_{ji} = 1$, $P_{ji}\geq0$.
\label{def:coarser}
\end{definition}
Allowing for mixtures of the POVM elements in Eq.~\eqref{eq:coarser} better matches that POVM elements can be expressed as mixtures of projectors. The following properties are expected when $\ma \coarser \mb$,
\begin{itemize}
    \item Observational entropy is larger for the coarser measurement, \mbox{$S_{\ma}(\hat\rho)\geq S_{\mb}(\hat\rho)$} $\forall\hat\rho$.
    \item Measuring the coarser $\ma$ gives less information about $\hat\rho$ than measuring $\mb$.
\end{itemize} 
Indeed, these properties are respectively shown in theorems \ref{thm:CoarserObservational} and \ref{thm:CoarserProcessing}.

The new definition of coarser successfully generalizes the previous one in Eq.~\eqref{eq:coarserold}. The previous definition is recovered by restricting $P$ to fulfill
\begin{equation}
    P_{ji} = 
    \begin{cases}
    1 & \text{for}\ i\in I_{j},\\
    0 & \text{otherwise} ,
    \end{cases}
\end{equation}
where the $I_{j}$ are disjoint and $\cup_j I_{j} = \{i\}$. Furthermore, both definitions of coarser are equivalent when the coarser measurement $\ma$ is projective, as stated below and proven in appendix \ref{apx:generalized}.
\begin{theorem}
For any $\ma$ with projectors $\{\Pa_i\}$ and any $\mb$ with POVM elements $\{\pob_j\}$, 
\begin{equation}
    \ma \coarser \mb \iff \Pa_j = \sum_{i\in I_{j}} \pob_i 
\end{equation}
where $I_{j}$ are disjoint.
\label{thm:projective}
\end{theorem}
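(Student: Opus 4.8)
The plan is to prove the two implications separately; the reverse is immediate, so essentially all of the content lives in the forward direction, where one must show the left stochastic matrix is forced to be a $0/1$ matrix.

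\emph{Reverse direction.} Suppose $\Pa_j=\sum_{i\in I_j}\pob_i$ with the $I_j$ disjoint. Since $\sum_j\Pa_j=\id=\sum_i\pob_i$ and the $I_j$ are disjoint, $\sum_{i\in\bigcup_j I_j}\pob_i=\id$, so every $\pob_i$ with $i\notin\bigcup_j I_j$ is a positive operator appearing in a sum of positive operators equal to zero, hence $\pob_i=0$. I then set $P_{ji}=1$ when $i\in I_j$ and spread the unit weight of each remaining column (each of which multiplies a zero operator) over the $j$'s in any left stochastic fashion; then $\sum_i P_{ji}\pob_i=\sum_{i\in I_j}\pob_i=\Pa_j$, which is exactly Eq.~\eqref{eq:coarser}, so $\ma\coarser\mb$.

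\emph{Forward direction.} Suppose $\ma\coarser\mb$, i.e. $\Pa_j=\sum_i P_{ji}\pob_i$ with $P$ left stochastic and each $\Pa_j$ a (Hermitian) projector. First I would record that the $\{\Pa_j\}$ are mutually orthogonal: inserting $\id=\sum_{j'}\Pa_{j'}$ into $\Pa_j=\Pa_j\id\Pa_j$ gives $\Pa_j=\Pa_j+\sum_{j'\neq j}\Pa_j\Pa_{j'}\Pa_j$, and since each $\Pa_j\Pa_{j'}\Pa_j=(\Pa_{j'}\Pa_j)^\dagger(\Pa_{j'}\Pa_j)$ is positive, the vanishing of their sum forces $\Pa_{j'}\Pa_j=0$ for $j'\neq j$. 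Next, for fixed $j'\neq j$ I conjugate the coarse-graining relation by $\Pa_{j'}$: the left side becomes $\Pa_{j'}\Pa_j\Pa_{j'}=0$, while the right side $\sum_i P_{ji}\,\Pa_{j'}\pob_i\Pa_{j'}$ is again a sum of positive operators with nonnegative coefficients, so each term vanishes; thus $P_{ji}>0$ implies $\Pa_{j'}\pob_i\Pa_{j'}=0$, and writing $\pob_i=(\pob_i^{1/2})^\dagger\pob_i^{1/2}$ this upgrades to $\Pa_{j'}\pob_i=0$, for every $j'\neq j$. Now the key combinatorial step: fix $i$ with $\pob_i\neq0$; if two distinct indices $j_1\neq j_2$ had $P_{j_1 i}>0$ and $P_{j_2 i}>0$, the previous sentence would give $\Pa_{j'}\pob_i=0$ for all $j'\neq j_1$ and, separately, for all $j'\neq j_2$, and since these two families of indices together exhaust all $j'$, $\pob_i=\id\,\pob_i=\sum_{j'}\Pa_{j'}\pob_i=0$, a contradiction. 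Hence there is a unique $j(i)$ with $P_{j(i)\,i}>0$, and $\sum_j P_{ji}=1$ forces $P_{j(i)\,i}=1$. Setting $I_j:=\{i:\pob_i\neq0,\ j(i)=j\}$ (disjoint by construction), the terms with $\pob_i=0$ drop out of $\Pa_j=\sum_i P_{ji}\pob_i$ and among the rest $P_{ji}=1$ exactly when $i\in I_j$, giving $\Pa_j=\sum_{i\in I_j}\pob_i$.

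I expect the only genuine obstacle to be the middle part of the forward direction: realizing that conjugating by $\Pa_{j'}$ for $j'\neq j$ and invoking ``a sum of positive operators vanishes only if each summand does'' pins down the supports of the $\pob_i$ relative to the $\Pa_j$, followed by the short argument that a nonzero POVM element cannot be split between two orthogonal projectors. The remaining ingredients --- mutual orthogonality of the $\Pa_j$ and the bookkeeping with vanishing $\pob_i$ --- are routine.
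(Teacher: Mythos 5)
Your proof is correct and follows essentially the same route as the paper's: in both, the forward direction hinges on mutual orthogonality of the $\Pa_j$ together with positivity forcing each column of $P$ to have a single nonzero entry equal to $1$, and the reverse direction is the same explicit $0/1$ construction. The only differences are cosmetic — the paper extracts $P_{\bar j i}=0$ from the scalar identity $\Tr[\Pa_{\bar j}\Pa_j]=0$ via a lemma stating $\Tr[\po\hat P]=0\iff\po\hat P=0$, whereas you conjugate the coarse-graining relation by $\Pa_{j'}$ at the operator level; and your handling of possibly vanishing $\pob_i$ in the reverse direction is slightly more careful than the paper's, which simply declares zero POVM elements not allowed.
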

Otherwise, when $\ma$ is not projective, the new definition is richer. For instance, for any projective measurement $\{\hat P_i\}$, define the measurement $\{\po_j\}$ as
\begin{equation}
    \po_j = \sum_i P_{ji} \hat P_i,
\end{equation}
with $P$ left stochastic. Then, $\{\po_j\}\coarser\{\hat P_i\}$, but with the earlier definition $\{\po_j\}\not\coarser\{\hat P_i\}$ for almost all $P$, specifically for the ones that have entries other than $0$ or $1$. Altogether, the definition of coarser proposed in this work generalizes the earlier definition, allowing to compare considerably more measurements.

\textit{\textbf{Properties of coarser measurements and relation to data processing.-}}
Coarser measurements can be described in terms of data processing, as shown next. Adopting this point of view allows to extend the definition of coarser to any measurement, even outside quantum theory. However, the following results focus on generalized measurements for the sake of simplicity. 

\begin{lemma} {\normalfont(Processing makes measurements coarser)} Consider two generalized measurements $\mb$, $\ma$ and denote their outcome probabilities by $\pb_i$, $\pa_j$. Then, the following statements are equivalent,
\begin{enumerate}[(i)]
    \item $\ma\coarser\mb$.\label{item:coarser}
    \item There is a left stochastic $P$ such that \begin{equation}
        \pa_j = \sum_i P_{ji}\ \pb_i,\ \forall\hat\rho.
    \end{equation} \label{item:probrel}
    \item There is a Markov chain $I\to J$ such that if $I$ is distributed as $\pb_i$ then $J$ is distributed as $\pa_j$, $\forall\hat\rho$.\label{item:ItoJ}
\end{enumerate} 
\label{lemma:processingmakescoarser}
\end{lemma}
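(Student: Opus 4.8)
The plan is to prove the chain of equivalences (i) $\Leftrightarrow$ (ii) $\Leftrightarrow$ (iii), since the hard content is entirely in (i) $\Leftrightarrow$ (ii); the equivalence (ii) $\Leftrightarrow$ (iii) is essentially the definition of a Markov chain as set up in the ``data processing model'' paragraph. First I would show (i) $\Rightarrow$ (ii): assuming $\ma\coarser\mb$, Definition \ref{def:coarser} gives a left stochastic $P$ with $\poa_j=\sum_i P_{ji}\,\pob_i$. Taking $\Tr[\,\cdot\,\hat\rho]$ of both sides and using linearity of the trace together with $\pa_j=\Tr[\poa_j\hat\rho]$ and $\pb_i=\Tr[\pob_i\hat\rho]$ yields $\pa_j=\sum_i P_{ji}\,\pb_i$ for every $\hat\rho$, which is (ii) with the same $P$. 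For (ii) $\Rightarrow$ (iii): given the left stochastic $P$ from (ii), define the Markov chain $I\to J$ by the transition matrix $P_{ji}=p_{j|i}$, which is legitimate precisely because $P$ is left stochastic; then by construction, if $I$ is distributed as $\pb_i$, the induced distribution of $J$ is $\sum_i P_{ji}\pb_i=\pa_j$. The converse (iii) $\Rightarrow$ (ii) just reads off the transition matrix of the given Markov chain and notes it is left stochastic.

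The only real obstacle is (ii) $\Rightarrow$ (i): from a single left stochastic matrix $P$ satisfying $\pa_j=\sum_i P_{ji}\pb_i$ for all states $\hat\rho$, I must recover the operator identity $\poa_j=\sum_i P_{ji}\pob_i$. The idea is that the map $\hat\rho\mapsto\Tr[\hat A\hat\rho]$ determines the self-adjoint operator $\hat A$ uniquely: if $\Tr[\hat A\hat\rho]=\Tr[\hat B\hat\rho]$ for all density matrices $\hat\rho$, then $\Tr[(\hat A-\hat B)\hat\rho]=0$ for all $\hat\rho$, and since density matrices span the space of Hermitian operators (one can test against $\hat\rho=\ketbra{\psi}{\psi}$ for arbitrary $\ket{\psi}$, which forces the quadratic form of $\hat A-\hat B$ to vanish, hence $\hat A=\hat B$ by the polarization identity). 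Applying this with $\hat A=\poa_j$ and $\hat B=\sum_i P_{ji}\pob_i$ — both Hermitian, and equal to the same function of $\hat\rho$ by hypothesis (ii) — gives $\poa_j=\sum_i P_{ji}\pob_i$, and since $P$ is left stochastic this is exactly $\ma\coarser\mb$ by Definition \ref{def:coarser}.

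Assembling the implications (i) $\Rightarrow$ (ii) $\Rightarrow$ (iii) $\Rightarrow$ (ii) $\Rightarrow$ (i) closes the cycle and establishes that all three statements are equivalent. I would present (ii) $\Rightarrow$ (iii) and (iii) $\Rightarrow$ (ii) briefly, emphasizing that ``left stochastic matrix'' and ``transition matrix of a Markov chain $I\to J$'' are two names for the same object, so the content there is bookkeeping. The write-up should make explicit that the key non-trivial point is the uniqueness of an operator given all its expectation values, as this is what upgrades the scalar relation on probabilities to the operator relation defining coarseness.
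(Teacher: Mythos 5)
Your proposal is correct and follows essentially the same route as the paper: pass between the operator identity and the probability identity by taking traces against arbitrary $\hat\rho$, and identify the left stochastic matrix with the transition matrix of a Markov chain $I\to J$. The only difference is that you spell out the uniqueness argument (an operator is determined by its expectation values on all states) for the direction (ii)~$\Rightarrow$~(i), which the paper leaves implicit in its claim that the two identities are ``equivalent because the equality holds for all $\hat\rho$''.
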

\begin{proof}
Starting with $(\ref{item:coarser})$, $\ma\coarser\mb$,  Eq.~\eqref{eq:coarser} is equivalent to
\begin{align}
     \Tr[\poa_j\hat\rho] &= \Tr\Big[\sum_i P_{ji}\pob_i\hat\rho\Big] ,\  \forall\hat\rho,
\end{align}
because the equality holds for all $\hat\rho$. This expression is actually condition $(\ref{item:probrel})$,
\begin{align}
     \pa_j &= \sum_i P_{ji}\ \pb_i,\  \forall\hat\rho.
\end{align}
This condition means there is a Markov chain $I\to J$ that fulfills condition $(\ref{item:ItoJ})$. Indeed, this chain has $P$ as transition matrix, and since in general
\begin{equation}
    p_{ij} = P_{ji}p_i \implies p_j = \sum_i P_{ji}p_i,
\end{equation}
if $I$ is distributed as $\pb_i$ then $J$ is distributed as $\pa_j$, $\forall\hat\rho$, completing the proof.
\end{proof}
This lemma characterizes coarser measurements using Markov chains, which modeled data processing in the previous section. Notice that only the marginals of $I$ and $J$ need to equate the distributions of $\mb$ and $\ma$. This relaxed requirement is crucial, because the joint distribution of performing $\mb$ and $\ma$ might not even be defined. Therefore, $\ma\coarser\mb$ does not mean that $\ma$ is necessarily obtained from processing $\mb$ outcomes. The lemma only entails that processing $\mb$ can generate in principle a measurement distributed as $\ma$. Also, notice that the transition matrix $P$ stays the same for all $\hat\rho$. 

Now, the central properties of coarser measurements appear from combining the processing inequalities in section \ref{sec:background} with the the relation between processing and coarser. 

\begin{theorem} Coarser measurements have a greater observational entropy for all $\hat\rho$,
\begin{equation}
    \ma \coarser \mb \implies S_{\ma}\geq S_{\mb}.
\end{equation}
The equality is achieved for all $\hat\rho$ if and only if \mbox{$\ma \finer \mb$}.\label{thm:CoarserObservational}
\end{theorem}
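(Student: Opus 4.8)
The plan is to bootstrap from Theorem~\ref{thm:observationalmonotonous} by observing that $\ma\coarser\mb$ forces \emph{both} the outcome probabilities and the volumes to transform under the same left stochastic matrix. Concretely, starting from Eq.~\eqref{eq:coarser}, $\poa_j=\sum_i P_{ji}\pob_i$, I would take $\Tr[\,\cdot\,\hat\rho]$ to get $\pa_j=\sum_i P_{ji}\pb_i$ (this is also Lemma~\ref{lemma:processingmakescoarser}(\ref{item:probrel})), and take $\Tr[\,\cdot\,]$ to get $\Va_j=\Tr\poa_j=\sum_i P_{ji}\Vb_i$. These are precisely the hypotheses of Theorem~\ref{thm:observationalmonotonous}, so since $S_\mb(\hat\rho)=S_{\text{obs}}(\pb_i,\Vb_i)$ and $S_\ma(\hat\rho)=S_{\text{obs}}(\pa_j,\Va_j)$ by Eq.~\eqref{eq:SCrho}, that theorem yields $S_\mb(\hat\rho)\le S_\ma(\hat\rho)$ for every $\hat\rho$, which is the inequality.

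For the equality clause, the ``if'' direction is immediate: if $\ma\finer\mb$ then $\mb\coarser\ma$ as well, and applying the inequality just proven in both directions gives $S_\ma(\hat\rho)=S_\mb(\hat\rho)$ for all $\hat\rho$. The content is in ``only if''. Assume $S_\ma(\hat\rho)=S_\mb(\hat\rho)$ for all $\hat\rho$. Then the equality clause of Theorem~\ref{thm:observationalmonotonous} must hold for every $\hat\rho$, i.e. $P_{ji}\,\pb_i\,\Va_j=P_{ji}\,\Vb_i\,\pa_j$ for all $i,j$ and all $\hat\rho$. Substituting $\pb_i=\Tr[\pob_i\hat\rho]$, $\pa_j=\Tr[\poa_j\hat\rho]$ and using that a linear functional $\hat\rho\mapsto\Tr[\hat A\hat\rho]$ which vanishes on all states forces $\hat A=0$, this upgrades to the \emph{operator} identity $P_{ji}\,\Va_j\,\pob_i=P_{ji}\,\Vb_i\,\poa_j$.

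With that in hand I would build the reverse stochastic matrix explicitly. Without loss of generality assume every $\poa_j$ and every $\pob_i$ is nonzero---outcomes that never occur change neither $\ma$ nor $S_\ma$, nor $\mb$ nor $S_\mb$---so all $\Va_j,\Vb_i>0$. Set $Q_{ij}:=P_{ji}\Vb_i/\Va_j\ge 0$. Then $\sum_i Q_{ij}=\Va_j^{-1}\sum_i P_{ji}\Vb_i=1$, so $Q$ is left stochastic, and $\sum_j Q_{ij}\poa_j=\sum_j \Va_j^{-1}P_{ji}\Vb_i\,\poa_j=\sum_j P_{ji}\,\pob_i=\pob_i$, using the operator identity and $\sum_j P_{ji}=1$. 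Hence $\mb\coarser\ma$, i.e. $\ma\finer\mb$, which completes the proof.

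The main obstacle is the step promoting the scalar equality condition supplied by Theorem~\ref{thm:observationalmonotonous}---which a priori constrains only the state-dependent probabilities---to an identity between the POVM elements themselves; this is exactly where the quantifier ``for all $\hat\rho$'' does the work, and it is what makes the construction of $Q$ possible. The remaining points (handling outcomes that never occur, and checking that $Q$ is genuinely nonnegative and column-normalised) are routine but worth stating.
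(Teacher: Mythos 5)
Your proposal is correct and follows essentially the same route as the paper: derive the probability and volume relations from Eq.~\eqref{eq:coarser}, invoke Theorem~\ref{thm:observationalmonotonous} for the inequality, and for the equality case build the reverse left stochastic matrix $\tilde P_{ij}=P_{ji}\Vb_i/\Va_j$ to conclude $\mb\coarser\ma$. The only cosmetic difference is that you promote the scalar equality condition to an operator identity before constructing $Q$, whereas the paper stays at the level of probabilities and defers the operator statement to Lemma~\ref{lemma:processingmakescoarser}; these are the same argument.
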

This result, proven in appendix \ref{apx:CoarserObservational}, is consistent with $S_\m$ measuring the entropy according to an observer that can only perform $\m$. An observer that has access to the finer measurement $\mb$ also has access to the coarser measurement $\ma$. Then, adequately, $S_{\mb}\leq S_{\ma}$.

The second property of coarser measurements is about non-increasing mutual information. Given a measurement $\m$, and $\hat\rho = \sum_x p_x \ketbra{x}{x}$, define 
\begin{equation}
    p_{xi} = \braket{x|\po_i|x} \braket{x|\hat\rho|x} = p_{i|x} p_x. \label{eq:jointdef}
\end{equation}
Computing the mutual information of this joint distribution is meaningful, because $I(p_{xi})$ quantifies how much $p_i$ tells us about $p_x$. These $p_x$ characterize the state of the system $\hat\rho$, and the $p_i$ are the distribution of $\m$ outcomes,
\begin{equation}
    p_i = \sum_x p_{xi} =\sum_x \braket{x|\hat\rho|x} \braket{x|\po_i|x}  = \Tr[\hat\rho \po_i].
\end{equation}
Therefore, $I(p_{xi})$ measures the information that $\m$ provides about $\hat\rho$, allowing to state the following,

\begin{theorem} Coarser measurements extract less information about $\hat\rho$,
\begin{equation}
    \ma \coarser \mb \implies I(\pa_{x j})\leq I(\pb_{x i}). \label{eq:lessinfo}
\end{equation}
The equality is achieved for all $\hat\rho$ if and only if \mbox{$\ma \finer \mb$}.\label{thm:CoarserProcessing}
\end{theorem}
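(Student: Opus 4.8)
The plan is to reduce the claim to the data processing inequality for mutual information (Theorem \ref{thm:DPIMI}) by exhibiting an appropriate three-variable Markov chain $X \to Y \to Z$. Fix a state $\hat\rho = \sum_x p_x \ketbra{x}{x}$ and let $X$ be the corresponding classical variable with distribution $p_x$. Take $Y$ to be the outcome of the finer measurement $\mb$, so that $p_{y|x} = \braket{x|\pob_y|x}$ and hence $\pb_{xi}$ from Eq.~\eqref{eq:jointdef} is exactly the joint distribution $p_{xy}$ of $(X,Y)$. By Lemma \ref{lemma:processingmakescoarser}, $\ma \coarser \mb$ gives a left stochastic $P$ with $\poa_j = \sum_i P_{ji}\pob_i$; use this same $P$ as the transition matrix defining $Z$ from $Y$, i.e.\ set $p_{z|y} = P_{zy}$ independently of $x$. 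This yields a bona fide Markov chain $X \to Y \to Z$ with $p_{xyz} = p_{z|y}\,p_{xy}$.

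The key step is then to check that the marginal $p_{xz}$ of this chain coincides with $\pa_{xj}$ as defined by Eq.~\eqref{eq:jointdef} applied to $\ma$. Summing out $y$,
\begin{equation}
    p_{xz} = \sum_y p_{z|y}\,p_{xy} = \sum_i P_{zi}\,\braket{x|\pob_i|x}\braket{x|\hat\rho|x} = \braket{x|\textstyle\sum_i P_{zi}\pob_i|x}\braket{x|\hat\rho|x},
\end{equation}
and by Eq.~\eqref{eq:coarser} the operator in the middle is $\poa_z$, so $p_{xz} = \braket{x|\poa_z|x}\braket{x|\hat\rho|x} = \pa_{xz}$, which is precisely the joint distribution associated with $\ma$. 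Applying Theorem \ref{thm:DPIMI} to the chain $X \to Y \to Z$ then gives $I(p_{xy}) \geq I(p_{xz})$, i.e.\ $I(\pb_{xi}) \geq I(\pa_{xj})$, which is Eq.~\eqref{eq:lessinfo}.

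For the equality condition, Theorem \ref{thm:DPIMI} tells us equality holds (for this $\hat\rho$) iff $p_{xyz} = p_{y|z}\,p_{xz}$, i.e.\ iff $X \to Z \to Y$ is also a Markov chain; requiring this for all $\hat\rho$ should be equivalent to the existence of a left stochastic $Q$ with $\pob_i = \sum_j Q_{ij}\poa_j$, that is, $\mb \coarser \ma$, which is the notation $\ma \finer \mb$. The main obstacle I anticipate is precisely this equality direction: one implication (if $\ma \finer \mb$ then equality) should follow by running the argument in both directions and composing, but the converse — extracting a single reverse stochastic matrix $Q$ valid for all $\hat\rho$ out of the per-state reversibility condition $p_{xyz} = p_{y|z}p_{xz}$ — requires care, since the pointwise conditions must be assembled into an operator identity $\pob_i = \sum_j Q_{ij}\poa_j$. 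I would handle this by noting that the diagonal entries $\braket{x|\pob_i|x}$ for all $x$ and all $\hat\rho$ (varying over a basis-spanning family) determine the operators, then argue that the reversed conditional $Q_{ij} = p_{i|j}$ is forced to be state-independent because it must simultaneously reproduce $\pob_i$ against every such diagonal, mirroring how Lemma \ref{lemma:processingmakescoarser} extracted a state-independent $P$.
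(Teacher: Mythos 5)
Your proof follows essentially the same route as the paper's: the same Markov chain $X\to I\to J$ built from the stochastic matrix $P$ supplied by Lemma \ref{lemma:processingmakescoarser}, the same verification that the marginals reproduce $\pb_{xi}$ and $\pa_{xj}$, and the same appeal to Theorem \ref{thm:DPIMI}. The equality case is also handled identically in the paper --- it specializes to $\hat\rho=\ketbra{x}{x}$ to turn the pointwise condition $p_{xi}=\sum_j p_{i|j}p_{xj}$ into the operator identity $\pob_i=\sum_j p_{i|j}\poa_j$ --- so the subtlety you flag about extracting a state-independent reverse matrix is resolved there exactly as you propose.
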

This result, proven in appendix \ref{apx:CoarserProcessing}, confirms the intuition that coarser measurements are less capable to extract information about the state. All the information extracted by the coarser measurement is already in the finer measurement statistics.

\textit{\textbf{Measurements that are coarser in a subspace.-}}
Consider two measurement apparatus with their highest sensitivity on different ranges of values. This situation could be described as one of the measurements being coarser in a certain subspace of the Hilbert space, and the other measurement being coarser in a different subspace. The notion of coarser in subspaces is formalized in the current section, specifically for the case in which observers ignore the state of the system before measuring it. First, some notation is needed.
\begin{definition} The possible outcomes of $\m$ in a subspace $\mathcal{G}$ are denoted by $\om(\mathcal{G})$,
\begin{equation}
    \om(\mathcal{G}) = \{i \ |\ \exists \ket{\psi} \in \mathcal{G} \text{ s.t. } p_i = \bra{\psi}\po_i\ket{\psi} \neq 0 \}.
\end{equation}
For $\mb$ and $\ma$, the possible outcomes will be respectively denoted by $\ob(\mathcal{G})$ and $\oa(\mathcal{G})$.
\label{def:setoutcomes}
\end{definition}

\begin{definition} {\normalfont (Coarser in a subspace)} $\ma$ is coarser than $\mb$ in a subspace $\mathcal{G}\subseteq\mathcal{H}$, denoted by \mbox{$\ma \overset{\mathcal{G}}{\coarser} \mb$}, when
\begin{align}
      \hat P_{\mathcal{G}}\poa_j\hat P_{\mathcal{G}} &=\sum_{i\in \ob(\mathcal{G})} P_{ji}\hat P_{\mathcal{G}}\pob_i \hat P_{\mathcal{G}}, \  \forall j\in \oa(\mathcal{G}),\label{eq:poCond}\\
      \Va_j &\geq \sum_{i\in \ob(\mathcal{G})} P_{ji} \Vb_i, \  \forall j\in \oa(\mathcal{G}),\label{eq:volCond}
\end{align}
with $P$ a left stochastic $|\oa(\mathcal{G})|\times|\ob(\mathcal{G})|$ matrix and $\hat P_{\mathcal{G}}$ the projector onto $\mathcal{G}$.
\end{definition}
These conditions are less restrictive than demanding $\ma\coarser\mb$, therefore allowing to compare more measurements. Moreover, $\ma\pcoarser{\mathcal{G}}\mb$ reduces to $\ma\coarser\mb$ when $\mathcal{G}$ is the whole Hilbert space $\mathcal{H}$. The condition over the volumes, Eq.~\eqref{eq:volCond}, is the one reflecting that observers completely ignore the state of the system. Oppositely, this definition does not apply when the observer knows that the states live in $\mathcal{G}$. Given this restriction, the observer would instead define the volumes as $V_{\mathcal{G},i} = \Tr[\hat P_\mathcal{G}\po_i]$, because their guess of the state of the system would be $\hat P_\mathcal{G}/\Tr[\hat P_\mathcal{G}]$. Then, applying the condition Eq.~\eqref{eq:volCond} to $V_{\mathcal{G},i}$ would become unnecessary. This condition would always be fulfilled due to the relations among the POVM elements demanded in Eq.~\eqref{eq:poCond}. 

Interestingly, the definition of coarser in a subspace can also be derived from data processing, as shown next.
\begin{lemma} {\normalfont(Being coarser in a subspace is characterized by processing)} Consider $\mb$, $\ma$ with outcome probabilities $\pb_i$, $\pa_j$. Then, all the following are equivalent,
\begin{enumerate}[(i)]
    \item $\ma\pcoarser{\mathcal{G}}\mb$.\label{item:coarsersub}
    \item There is a $P$ left stochastic such that 
    \begin{align}
        \pa_j &= \sum_i P_{ji}\pb_i,\  \forall\hat\rho\in\mathcal{D}(\mathcal{G}),\label{eq:probrelsub}\\
        \Va_j &= \sum_i P_{ji}\Vb_i, \label{eq:volrelsub}
    \end{align} \label{item:probrelsub}
    where $\mathcal{D}(\mathcal{G})$ denotes the $\hat\rho$ formed from states in $\mathcal{G}$. 
    \item There exists a Markov chain $I\to J$ such that if $I$ is distributed as $\pb_i$ then $J$ is distributed as $\pa_j$, for all $\hat\rho\in\mathcal{D}(\mathcal{G})$ and for $\hat\rho^{\text{id}}$.\label{item:ItoJsub}
\end{enumerate} 
\label{lemma:processingmakescoarsersub}
\end{lemma}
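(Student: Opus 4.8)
The plan is to prove Lemma~\ref{lemma:processingmakescoarsersub} by establishing the cycle of implications $(\ref{item:coarsersub})\Rightarrow(\ref{item:probrelsub})\Rightarrow(\ref{item:ItoJsub})\Rightarrow(\ref{item:coarsersub})$, mimicking the structure of the proof of Lemma~\ref{lemma:processingmakescoarser} but carrying the subspace projector $\hat P_{\mathcal{G}}$ and the extra volume condition along. The key observation throughout is that for $\hat\rho\in\mathcal{D}(\mathcal{G})$ one has $\hat P_{\mathcal{G}}\hat\rho\hat P_{\mathcal{G}}=\hat\rho$, so that $\Tr[\po_i\hat\rho]=\Tr[\hat P_{\mathcal{G}}\po_i\hat P_{\mathcal{G}}\,\hat\rho]$; this is what lets the ``sandwiched'' POVM condition \eqref{eq:poCond} be promoted to a genuine probability relation on $\mathcal{D}(\mathcal{G})$.

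For $(\ref{item:coarsersub})\Rightarrow(\ref{item:probrelsub})$: take the $P$ furnished by the definition (extended by zero columns for $i\notin\ob(\mathcal{G})$, which is harmless since those $\pb_i$ vanish on $\mathcal{D}(\mathcal{G})$; one should check the extension can still be made left stochastic, e.g. by padding an extra row or distributing the missing mass arbitrarily). Sandwiching \eqref{eq:poCond} between $\bra{\psi}$ and $\ket{\psi}$ for $\ket{\psi}\in\mathcal{G}$, and using $\hat P_{\mathcal{G}}\ket{\psi}=\ket{\psi}$, gives $\pa_j=\sum_i P_{ji}\pb_i$ for all pure $\hat\rho\in\mathcal{D}(\mathcal{G})$, hence for all $\hat\rho\in\mathcal{D}(\mathcal{G})$ by linearity. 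The volume relation \eqref{eq:volrelsub} is the subtle point: \eqref{eq:volCond} only gives an inequality $\Va_j\geq\sum_i P_{ji}\Vb_i$, so to get equality one must first use the consistency constraint that both sides sum correctly --- summing \eqref{eq:volCond} over $j$ and comparing $\sum_j\Va_j$ with $\sum_i\Vb_i$ (using left-stochasticity) --- or, more cleanly, observe that applying \eqref{eq:probrelsub} to $\hat\rho^{\text{id}}_{\mathcal{G}}=\hat P_{\mathcal{G}}/\Tr\hat P_{\mathcal{G}}$ forces the volumes-restricted-to-$\mathcal{G}$ to transform exactly, and then argue the orthogonal-complement volumes can be absorbed into the freedom of choosing $P$. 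I expect this reconciliation of the inequality \eqref{eq:volCond} with the desired equality \eqref{eq:volrelsub} to be the main obstacle, and the honest fix is probably that $P$ in statement~(\ref{item:probrelsub}) need not be the same $P$ as in the definition --- one rescales/augments it to make \eqref{eq:volrelsub} hold while preserving \eqref{eq:probrelsub}, which is possible precisely because outcomes $j\notin\oa(\mathcal{G})$ and $i\notin\ob(\mathcal{G})$ play no role on $\mathcal{D}(\mathcal{G})$.

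For $(\ref{item:probrelsub})\Rightarrow(\ref{item:ItoJsub})$: given such a $P$, define the Markov chain $I\to J$ by transition matrix $P_{ji}=p_{j|i}$, exactly as in Lemma~\ref{lemma:processingmakescoarser}; then $p_{ij}=P_{ji}p_i\Rightarrow p_j=\sum_i P_{ji}p_i$, so if $I\sim\pb_i$ then $J\sim\pa_j$, which holds for every $\hat\rho\in\mathcal{D}(\mathcal{G})$ by \eqref{eq:probrelsub} and for $\hat\rho^{\text{id}}$ by \eqref{eq:volrelsub} combined with $\Vb_i/\Vb_{\text{tot}}=\pb{}_i^{\text{id}}$ (the analogue of Eq.~\eqref{eq:volId}), noting $\Va_{\text{tot}}=\sum_j\Va_j=\sum_j\sum_i P_{ji}\Vb_i=\sum_i\Vb_i=\Vb_{\text{tot}}$ by left-stochasticity so the normalizations match. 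Finally $(\ref{item:ItoJsub})\Rightarrow(\ref{item:coarsersub})$: from the Markov chain read off $P_{ji}=p_{j|i}$, left stochastic; applying the distribution condition to pure states $\ket{\psi}\in\mathcal{G}$ recovers $\bra{\psi}\poa_j\ket{\psi}=\sum_i P_{ji}\bra{\psi}\pob_i\ket{\psi}$ for all $\ket{\psi}\in\mathcal{G}$, which by polarization upgrades to the operator identity \eqref{eq:poCond} after sandwiching with $\hat P_{\mathcal{G}}$, and applying it to $\hat\rho^{\text{id}}$ (with the volume--probability dictionary) gives the volume relation, a fortiori the inequality \eqref{eq:volCond}; one then restricts/pads $P$ to the index sets $\oa(\mathcal{G})\times\ob(\mathcal{G})$, which only drops terms that already vanish, to land exactly in the definition.
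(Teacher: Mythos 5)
Your overall route is the same as the paper's: a cycle of implications passing through the probability relation and the Markov chain, with the real work concentrated in translating between the $|\oa(\mathcal{G})|\times|\ob(\mathcal{G})|$ matrix of the definition and a full left stochastic matrix over all outcomes. You also correctly identify where the difficulty lies. But at exactly that point the proposal has a gap: the extension of $\widetilde P$ in $(\ref{item:coarsersub})\Rightarrow(\ref{item:probrelsub})$ is never actually constructed, and the concrete routes you offer fail. Extending ``by zero columns for $i\notin\ob(\mathcal{G})$'' destroys left stochasticity (those columns must sum to $1$, and you cannot ``pad an extra row'' because the row index ranges over the fixed outcomes of $\ma$); ``distributing the missing mass arbitrarily'' preserves left stochasticity but not the volume equality \eqref{eq:volrelsub}; and summing \eqref{eq:volCond} over $j\in\oa(\mathcal{G})$ only yields $\sum_{j\in\oa(\mathcal{G})}\Va_j\geq\sum_{i\in\ob(\mathcal{G})}\Vb_i$, which does not force termwise equality since outcomes outside $\oa(\mathcal{G})$ and $\ob(\mathcal{G})$ may carry volume. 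The fix you gesture at in your final clause must be made explicit, and it is the heart of the paper's proof: set $P_{ji}=C_j=\bigl(\Va_j-\sum_{i\in\ob(\mathcal{G})}\widetilde P_{ji}\Vb_i\bigr)/\sum_{i\notin\ob(\mathcal{G})}\Vb_i$ for every $i\notin\ob(\mathcal{G})$, and $P_{ji}=0$ for $i\in\ob(\mathcal{G})$, $j\notin\oa(\mathcal{G})$. Then $C_j\geq 0$ by \eqref{eq:volCond}, the new columns sum to one because $\sum_j\Va_j=\sum_i\Vb_i$, the probability relation is unaffected because $\pb_i=0$ on $\mathcal{D}(\mathcal{G})$ for $i\notin\ob(\mathcal{G})$, and the volume equality holds by construction.

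The reverse direction has a smaller omission of the same kind: to restrict $P$ to $\oa(\mathcal{G})\times\ob(\mathcal{G})$ and still have a left stochastic matrix you must prove $P_{ji}=0$ for all $i\in\ob(\mathcal{G})$ and $j\notin\oa(\mathcal{G})$. Your ``only drops terms that already vanish'' handles the columns $i\notin\ob(\mathcal{G})$ (via $\pob_i\hat P_{\mathcal{G}}=0$) but not these rows; the paper gets them from $0=\pa_j=\sum_{i\in\ob(\mathcal{G})}P_{ji}\pb_i$ for all $\hat\rho\in\mathcal{D}(\mathcal{G})$ together with the existence, for each $i\in\ob(\mathcal{G})$, of a state in $\mathcal{G}$ with $\pb_i>0$. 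The remainder of your argument (evaluating on states in $\mathcal{G}$, upgrading to the sandwiched operator identity, the $\hat\rho^{\text{id}}$ dictionary for volumes) matches the paper.
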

The proof is in appendix \ref{apx:processingmakescoarsersub}. The connection between $\ma\pcoarser{\mathcal{G}}\mb$ and processing allows once more to use the inequalities of section \ref{sec:background} to show that $S_\m$ and mutual information are monotonic. Moreover, the previous theorems for coarser measurements are a particular case of the ones presented next, the case $\mathcal{G} = \mathcal{H}$. The two following theorems are respectively proven in appendices \ref{apx:CoarserObservationalSub} and \ref{apx:CoarserProcessingSub}.

\begin{theorem} Measurements that are coarser in $\mathcal{G}$ have a greater observational entropy in $\mathcal{G}$,
\begin{equation}
    \ma \pcoarser{\mathcal{G}} \mb \implies S_{\ma}(\hat\rho)\geq S_{\mb}(\hat\rho) ,\  \forall\hat\rho\in\mathcal{D}(\mathcal{G}).
\end{equation}
The equality occurs for all $\hat\rho\in\mathcal{D}(\mathcal{G})$ if and only if \mbox{$\ma \pfiner{\mathcal{G}} \mb$}.\label{thm:CoarserObservationalSub}
\end{theorem}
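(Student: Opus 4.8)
The plan is to route everything through Lemma~\ref{lemma:processingmakescoarsersub} and Theorem~\ref{thm:observationalmonotonous}, exactly mirroring the full-space Theorem~\ref{thm:CoarserObservational}. First I would fix an arbitrary $\hat\rho\in\mathcal{D}(\mathcal{G})$ and invoke Lemma~\ref{lemma:processingmakescoarsersub}: $\ma\pcoarser{\mathcal{G}}\mb$ is equivalent to the existence of a left stochastic $P$ with $\pa_j=\sum_i P_{ji}\pb_i$ for every $\hat\rho\in\mathcal{D}(\mathcal{G})$ and $\Va_j=\sum_i P_{ji}\Vb_i$. The reason for using the lemma rather than the raw definition is that it has already upgraded the volume inequality~\eqref{eq:volCond} to an exact transformation law, by a suitable extension of the stochastic matrix, so that the pair $(\pb_i,\Vb_i)$ satisfies the hypothesis~\eqref{eq:probrel} of Theorem~\ref{thm:observationalmonotonous} verbatim. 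Since for $\hat\rho\in\mathcal{D}(\mathcal{G})$ the outcomes outside $\ob(\mathcal{G})$ and $\oa(\mathcal{G})$ carry zero probability, the sums in~\eqref{eq:SCrho} reduce (using $0\ln 0 = 0$) to sums over $\ob(\mathcal{G})$ and $\oa(\mathcal{G})$, and on those index sets $S_{\mb}(\hat\rho)=S_{\text{obs}}(\pb_i,\Vb_i)$ and $S_{\ma}(\hat\rho)=S_{\text{obs}}(\pa_j,\Va_j)$ in the notation of~\eqref{eq:obspV}.

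With this identification the inequality is immediate: Theorem~\ref{thm:observationalmonotonous} gives $S_{\text{obs}}(\pb_i,\Vb_i)\le S_{\text{obs}}(\pa_j,\Va_j)$, that is $S_{\mb}(\hat\rho)\le S_{\ma}(\hat\rho)$, and $\hat\rho\in\mathcal{D}(\mathcal{G})$ was arbitrary. The ``if'' half of the equality clause is just as quick: $\ma\pfiner{\mathcal{G}}\mb$ means $\mb\pcoarser{\mathcal{G}}\ma$, so the inequality already proved, applied with the two measurements interchanged, yields $S_{\ma}(\hat\rho)\le S_{\mb}(\hat\rho)$ on $\mathcal{D}(\mathcal{G})$, and the two inequalities force equality there.

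The substantive direction is ``only if''. Assuming $S_{\ma}(\hat\rho)=S_{\mb}(\hat\rho)$ for all $\hat\rho\in\mathcal{D}(\mathcal{G})$, the equality clause of Theorem~\ref{thm:observationalmonotonous} forces $P_{ji}\,\pb_i\,\Va_j=P_{ji}\,\Vb_i\,\pa_j$ for every $i,j$ and every such $\hat\rho$; since $\Vb_i>0$ for $i\in\ob(\mathcal{G})$ and $\Va_j>0$ for $j\in\oa(\mathcal{G})$ (those POVM elements are nonzero by Definition~\ref{def:setoutcomes}), this says $\pb_i/\Vb_i=\pa_j/\Va_j$ along every branch with $P_{ji}\neq0$. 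I would then set $Q_{ij}:=P_{ji}\,\Vb_i/\Va_j$ and verify it is left stochastic, $\sum_i Q_{ij}=\Va_j^{-1}\sum_i P_{ji}\Vb_i=1$, and that $\Vb_i=\sum_j Q_{ij}\Va_j$ and, using the branch condition, $\pb_i=\sum_j Q_{ij}\pa_j$ for all $\hat\rho\in\mathcal{D}(\mathcal{G})$. By Lemma~\ref{lemma:processingmakescoarsersub} read in the reverse direction (its condition~(ii) with $\mb$ and $\ma$ interchanged), this gives $\mb\pcoarser{\mathcal{G}}\ma$, i.e.\ $\ma\pfiner{\mathcal{G}}\mb$, closing the proof. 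The main obstacle I anticipate is the bookkeeping at the boundary of $\mathcal{G}$: one has to be sure that the scalar identities above, which a priori hold only for probabilities arising from states in $\mathcal{G}$, are exactly what Lemma~\ref{lemma:processingmakescoarsersub} needs in order to promote $\pb_i=\sum_j Q_{ij}\pa_j$ back to a relation among the $\hat P_{\mathcal{G}}\poa_j\hat P_{\mathcal{G}}$, and that the outcomes outside $\ob(\mathcal{G})$ and $\oa(\mathcal{G})$, whose volumes are generally nonzero, do not spoil the stochasticity of $Q$; this is precisely what the equality form of the volume relation in the lemma is there to handle. Everything else parallels appendix~\ref{apx:CoarserObservational}.
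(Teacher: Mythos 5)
Your proposal is correct and follows essentially the same route as the paper: the paper's proof of this theorem simply re-runs the argument of Theorem~\ref{thm:CoarserObservational} with Lemma~\ref{lemma:processingmakescoarser} replaced by Lemma~\ref{lemma:processingmakescoarsersub}, noting that the one extra ingredient is verifying the exact volume relation $\Vb_i=\sum_j Q_{ij}\Va_j$ for the reversed stochastic matrix, which you identify explicitly.
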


The next result uses the same joint distributions described in Eq.~\eqref{eq:jointdef}. 
\begin{theorem} Measurements that are coarser in $\mathcal{G}$ extract less information about the $\hat\rho$ in $\mathcal{D}(\mathcal{G})$,
\begin{equation}
    \ma \pcoarser{\mathcal{G}} \mb \implies I(\pa_{x j})\leq I(\pb_{x i}),\  \forall\hat\rho\in\mathcal{D}(\mathcal{G}). \label{eq:lessinfosub}
\end{equation}
The equality occurs $\forall\hat\rho\in\mathcal{D}(\mathcal{G})$ if and only if 
\begin{equation}
    \hat P_{\mathcal{G}}\poa_i\hat P_{\mathcal{G}} =\sum_{j\in \oa(\mathcal{G})} P_{ji}\hat P_{\mathcal{G}}\pob_j \hat P_{\mathcal{G}} ,\  \forall i\in \ob(\mathcal{G}),
\end{equation}
for a $P$ that is left stochastic.
\label{thm:CoarserProcessingSub}
\end{theorem}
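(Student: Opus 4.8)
The plan is to reduce everything to the data processing inequality for mutual information (Theorem \ref{thm:DPIMI}) by setting up the correct three-variable Markov chain $X\to Y\to Z$, exactly as was done for the global statement in appendix \ref{apx:CoarserProcessing}, but now restricting attention to states $\hat\rho\in\mathcal{D}(\mathcal{G})$ and using Lemma \ref{lemma:processingmakescoarsersub} in place of Lemma \ref{lemma:processingmakescoarser}. Concretely, fix $\hat\rho = \sum_x p_x \ketbra{x}{x}\in\mathcal{D}(\mathcal{G})$, so that every $\ket{x}$ with $p_x\neq 0$ lies in $\mathcal{G}$. Let $X$ be the eigenvalue label, $Y=I$ the outcome of $\mb$ and $Z=J$ the outcome of $\ma$, with $p_{x i} = \braket{x|\pob_i|x} p_x$ as in Eq.~\eqref{eq:jointdef}. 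By Lemma \ref{lemma:processingmakescoarsersub}, since $\hat\rho\in\mathcal{D}(\mathcal{G})$, there is a left stochastic $P$ such that $\pa_j = \sum_i P_{ji}\pb_i$ for this $\hat\rho$; moreover, because each occupied $\ket{x}$ is in $\mathcal{G}$, the same relation $\bra{x}\poa_j\ket{x} = \sum_i P_{ji}\bra{x}\pob_i\ket{x}$ holds conditionally on each $x$ (this follows from Eq.~\eqref{eq:poCond} sandwiched by $\ket{x}$, using $\hat P_{\mathcal{G}}\ket{x}=\ket{x}$ and that outcomes outside $\ob(\mathcal{G})$ have $\bra{x}\pob_i\ket{x}=0$ for $\ket{x}\in\mathcal{G}$). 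Hence $p_{z|y}:=P_{ji}$ is a bona fide transition matrix independent of $x$, giving $p_{xyz}=p_{z|y}p_{xy}$, i.e. $X\to Y\to Z$ is a Markov chain with marginals $p_{xi}=\pb_{xi}$ and $p_{xj}=\pa_{xj}$. Theorem \ref{thm:DPIMI} then yields $I(\pb_{xi})\ge I(\pa_{xj})$ immediately.

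For the equality case, the plan is again to invoke the equality condition of Theorem \ref{thm:DPIMI}, which states $I(p_{xy})=I(p_{xz})$ iff $p_{xyz}=p_{y|z}p_{xz}$, i.e. iff $X\to Z\to Y$ is also a Markov chain. I would translate this "reverse Markov chain" condition back into a relation among POVM elements. The reverse transition matrix $p_{y|z}$ must be independent of $x$ for all $\hat\rho\in\mathcal{D}(\mathcal{G})$ simultaneously; running the argument of the forward direction backwards, having such an $x$-independent reverse stochastic matrix $\tilde P_{ij}=p_{i|j}$ with $\pb_i = \sum_j \tilde P_{ij}\pa_j$ conditionally on each $\ket{x}\in\mathcal{G}$ is equivalent (by the same sandwiching-by-$\hat P_{\mathcal{G}}$ and reconstruction argument used for Lemma \ref{lemma:processingmakescoarsersub}) to $\hat P_{\mathcal{G}}\pob_i\hat P_{\mathcal{G}} = \sum_{j\in\oa(\mathcal{G})} \tilde P_{ij}\hat P_{\mathcal{G}}\poa_j\hat P_{\mathcal{G}}$ for all $i\in\ob(\mathcal{G})$, which is exactly the stated condition $\ma\pfiner{\mathcal{G}}\mb$ — note that no volume condition is needed here since the mutual-information statement does not involve the $V$'s, which is why the claimed equality criterion is purely the POVM relation (unlike the equality criterion in Theorem \ref{thm:CoarserObservationalSub}, which does carry a volume constraint).

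The main obstacle I anticipate is handling the equality-case bookkeeping carefully: one must check that the reverse transition matrix obtained from Theorem \ref{thm:DPIMI} for a single $\hat\rho$ can be taken to work for all $\hat\rho\in\mathcal{D}(\mathcal{G})$ at once, and that the eigenbasis of a general mixed state in $\mathcal{D}(\mathcal{G})$ spans enough of $\mathcal{G}$ to pin down the operator identity on the sandwiched subspace (pure states $\ket{\psi}\bra{\psi}$ with $\ket{\psi}\in\mathcal{G}$ already suffice, since $\hat P_{\mathcal{G}} \hat A \hat P_{\mathcal{G}} = \hat P_{\mathcal{G}}\hat B\hat P_{\mathcal{G}}$ iff $\bra{\psi}\hat A\ket{\psi}=\bra{\psi}\hat B\ket{\psi}$ for all $\ket{\psi}\in\mathcal{G}$, by polarization). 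A minor secondary point is being explicit that outcomes $j\notin\oa(\mathcal{G})$ and $i\notin\ob(\mathcal{G})$ contribute nothing to either mutual information when $\hat\rho\in\mathcal{D}(\mathcal{G})$, so the restriction of the stochastic matrices to the index sets $\oa(\mathcal{G}),\ob(\mathcal{G})$ loses no information; this justifies working with the $|\oa(\mathcal{G})|\times|\ob(\mathcal{G})|$ matrix $P$ from the definition rather than a full-size one. Everything else is a direct transcription of the $\mathcal{G}=\mathcal{H}$ argument in appendix \ref{apx:CoarserProcessing}.
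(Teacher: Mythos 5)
Your proposal follows essentially the same route as the paper's proof: build the Markov chain $X\to I\to J$ from the stochastic matrix supplied by Lemma \ref{lemma:processingmakescoarsersub}, apply Theorem \ref{thm:DPIMI}, and for the equality case sandwich the reverse relation with $\hat P_{\mathcal{G}}$ using only pure states in $\mathcal{G}$, observing that the volume condition plays no role. The bookkeeping points you flag (restricting versus extending $P$, and the $x$-independence of the reverse transition matrix) are handled the same way in the paper, so the proposal is correct and equivalent.
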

Interestingly, Eq.~\eqref{eq:poCond} is enough to obtain that $I(\pa_{x j})\leq I(\pb_{x i}),\ \forall\hat\rho\in\mathcal{D}(\mathcal{G})$. The condition over the volumes, Eq.~\eqref{eq:volCond}, is not needed for this result.

Besides the monotonicity of observational entropy and mutual information, the definition of coarser in subspaces has the following desirable property.
\begin{theorem} {\normalfont (Coarseness is preserved when reducing the subspace)} Given $\mathcal{F} \subseteq\mathcal{G}$, 
\begin{equation}
     \ma\overset{\mathcal{G}}{\coarser}\mb \implies \ma\overset{\mathcal{F}}{\coarser}\mb. \label{eq:reduce}
\end{equation}
Moreover, a left stochastic matrix $P^{\mathcal{F}}$ for \mbox{$\ma\overset{\mathcal{F}}{\coarser}\mb$} can be obtained from restricting the matrix $P^{\mathcal{G}}$ of $\ma\overset{\mathcal{G}}{\coarser}\mb$ to the indices $\oa(\mathcal{F})\subseteq \oa(\mathcal{G})$ and  $\ob(\mathcal{F})\subseteq \ob(\mathcal{G})$.
\label{thm:preserved}
\end{theorem}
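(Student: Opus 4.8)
The plan is to witness $\ma\overset{\mathcal{F}}{\coarser}\mb$ with the matrix $P^{\mathcal{F}}$ obtained by keeping only the rows $j\in\oa(\mathcal{F})$ and columns $i\in\ob(\mathcal{F})$ of the left stochastic matrix $P^{\mathcal{G}}$ that witnesses $\ma\overset{\mathcal{G}}{\coarser}\mb$. The inclusions $\oa(\mathcal{F})\subseteq\oa(\mathcal{G})$ and $\ob(\mathcal{F})\subseteq\ob(\mathcal{G})$ hold because every $\ket{\psi}\in\mathcal{F}$ is also in $\mathcal{G}$, so this restriction is well-defined; it then remains to verify the three defining conditions for $P^{\mathcal{F}}$: the POVM identity Eq.~\eqref{eq:poCond}, left stochasticity, and the volume inequality Eq.~\eqref{eq:volCond}, all with $\mathcal{G}$ replaced by $\mathcal{F}$.

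The workhorse will be the observation that $i\notin\om(\mathcal{G})$ forces $\po_i\hat P_{\mathcal{G}}=0$: since $\bra{\psi}\po_i\ket{\psi}=\lVert\po_i^{1/2}\ket{\psi}\rVert^2$, vanishing for all $\ket{\psi}\in\mathcal{G}$ gives $\po_i^{1/2}\hat P_{\mathcal{G}}=0$ and hence $\po_i\hat P_{\mathcal{G}}=0$. For the POVM condition I would use $\hat P_{\mathcal{F}}=\hat P_{\mathcal{G}}\hat P_{\mathcal{F}}=\hat P_{\mathcal{F}}\hat P_{\mathcal{G}}$ (valid because $\mathcal{F}\subseteq\mathcal{G}$) to multiply Eq.~\eqref{eq:poCond} on both sides by $\hat P_{\mathcal{F}}$, obtaining $\hat P_{\mathcal{F}}\poa_j\hat P_{\mathcal{F}}=\sum_{i\in\ob(\mathcal{G})}P^{\mathcal{G}}_{ji}\,\hat P_{\mathcal{F}}\pob_i\hat P_{\mathcal{F}}$ for every $j\in\oa(\mathcal{F})$; the workhorse applied with $\mathcal{F}$ kills every term with $i\notin\ob(\mathcal{F})$, leaving exactly Eq.~\eqref{eq:poCond} for $\mathcal{F}$ with matrix $P^{\mathcal{F}}$.

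The main obstacle I anticipate is left stochasticity, since discarding rows of $P^{\mathcal{G}}$ could a priori lower a column sum. What rescues it is that the discarded rows have zero entries in the retained columns: for $j\in\oa(\mathcal{G})\setminus\oa(\mathcal{F})$ the workhorse gives $\hat P_{\mathcal{F}}\poa_j\hat P_{\mathcal{F}}=0$, so by the already-established $\mathcal{F}$-version of Eq.~\eqref{eq:poCond} the operator $\sum_{i\in\ob(\mathcal{F})}P^{\mathcal{G}}_{ji}\,\hat P_{\mathcal{F}}\pob_i\hat P_{\mathcal{F}}$ vanishes; being a sum of positive semidefinite operators with nonnegative weights, each summand must be zero, and since $\hat P_{\mathcal{F}}\pob_i\hat P_{\mathcal{F}}\neq0$ for $i\in\ob(\mathcal{F})$ this forces $P^{\mathcal{G}}_{ji}=0$. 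Hence $\sum_{j\in\oa(\mathcal{F})}P^{\mathcal{G}}_{ji}=\sum_{j\in\oa(\mathcal{G})}P^{\mathcal{G}}_{ji}=1$ for each $i\in\ob(\mathcal{F})$, so $P^{\mathcal{F}}$ is left stochastic.

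The volume inequality is then routine: for $j\in\oa(\mathcal{F})\subseteq\oa(\mathcal{G})$, using Eq.~\eqref{eq:volCond} together with $P^{\mathcal{G}}_{ji}\geq0$ and $\Vb_i=\Tr\pob_i\geq0$, one has $\Va_j\geq\sum_{i\in\ob(\mathcal{G})}P^{\mathcal{G}}_{ji}\Vb_i\geq\sum_{i\in\ob(\mathcal{F})}P^{\mathcal{G}}_{ji}\Vb_i=\sum_{i\in\ob(\mathcal{F})}P^{\mathcal{F}}_{ji}\Vb_i$. Collecting the three verified conditions gives $\ma\overset{\mathcal{F}}{\coarser}\mb$ via the restricted matrix, which is precisely the assertion.
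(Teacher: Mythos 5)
Your proposal is correct and follows essentially the same route as the paper: restrict $P^{\mathcal{G}}$ to the indices $\oa(\mathcal{F})$, $\ob(\mathcal{F})$, use the fact that $\po_i\hat P_{\mathcal{G}}=0$ iff $i\notin\om(\mathcal{G})$ to sandwich the POVM condition with $\hat P_{\mathcal{F}}$, and recover left stochasticity by showing the discarded rows $j\in\oa(\mathcal{G})\setminus\oa(\mathcal{F})$ have vanishing entries in the retained columns. The only cosmetic differences are that the paper derives the key lemma via a Kraus decomposition and packages the zero-entry argument as a separate trace-based lemma, whereas you use $\po_i^{1/2}$ and a direct positive-semidefinite-sum argument.
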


The implication is proven by applying lemma \ref{lemma:processingmakescoarsersub} twice, while considering $\mathcal{D}(\mathcal{F})\subseteq \mathcal{D}(\mathcal{G})$. The observations about $P^{\mathcal{F}}$ are proven in appendix \ref{apx:observations}. Coarser measurements are still coarser when the subspace is reduced, however, the opposite is false in general. The subspace where a measurement is coarser cannot be enlarged in general. Furthermore, coarseness is not preserved under the sum of subspaces, 
\begin{equation}
    \ma\overset{\mathcal{F}}{\coarser}\mb,\, \ma\overset{\mathcal{G}}{\coarser}\mb \centernot \implies \ma\pcoarser{\mathcal{F}+\mathcal{G}}\mb,
\end{equation}
as seen from the counterexample provided in appendix \ref{apx:counterexempleSum}. 
Furthermore, consider a measurement that is coarser in $\mathcal{F}$ and in $\mathcal{G}$, with $\mathcal{F}\subseteq\mathcal{G}$. Even in this situation, extending the $P^{\mathcal{F}}$ to $P^{\mathcal{G}}$ is not possible in general, as shown in appendix \ref{apx:counterexempleExtension}. Therefore, even if an extension of $P^{\mathcal{F}}$ to $P^{\mathcal{G}}$ does not exist, nothing can be said about whether $\ma\pcoarser{\mathcal{G}}\mb$.

Finally, let us go back to the starting example about measurement apparatus which have their best sensitivity on different scales. Appropriately, there are $\mb$ and $\ma$ where one is coarser on a subspace $\mathcal{F}$ and the other is coarser in a different subspace $\mathcal{G}$. For instance,
\begin{align}
    &\pob_1 = \ketbra{2}{2}+\ketbra{3}{3}, \ \pob_2 = \ketbra{0}{0}, \ \pob_3 = \ketbra{1}{1},  \nonumber \\
    &\poa_1 = \ketbra{0}{0}+\ketbra{1}{1}, \ \poa_2 = \ketbra{2}{2}, \ \poa_3 = \ketbra{3}{3},
\end{align}
with $\mathcal{F} = \Span(\ket{0},\ket{1})$ and $\mathcal{G} = \Span(\ket{2},\ket{3})$. In this example, only $\ma$ is coarser in $\mathcal{F}$ and only $\mb$ is coarser in $\mathcal{G}$.


\section{Recovering observational entropy properties using coarseness}
\label{sec:obsEntropyProp}

Several known important properties of observational entropy can be recovered quickly using coarser measurements. These properties support that $S_{\m}$ is a good measure of entropy when only the measurement $\m$ is available.  

\begin{definition} {\normalfont (Measurement associated to a self-adjoint operator $\hat A$)} The measurement $\mathcal C_{\hat A}$ consists of the projectors $\hat P_a$ of the eigendecomposition $\hat A = \sum_a a \hat P_a$.
\end{definition}
\begin{theorem} {\normalfont(Bounds on the observational entropy)}
\begin{equation}
    S_{\text{vN}}(\hat\rho) \leq S_{\m}(\hat\rho) \leq \ln \dim \mathcal{H}, \label{eq:bounds}
\end{equation}
with the equality cases
\begin{align}
    S_{\text{vN}}(\hat\rho) = S_{\m}(\hat\rho) &\iff \m_{\hat\rho}\coarser\m, \label{eq:equalVN}\\
    S_{\m}(\hat\rho) = \ln \dim \mathcal{H} &\iff p_i = \frac{V_i}{\dim \mathcal{H}}
    .\label{eq:equalId}
\end{align}
\label{thm:propObs}
\end{theorem}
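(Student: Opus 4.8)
The plan is to treat the two inequalities in Eq.~\eqref{eq:bounds} separately. For the upper bound I would simply invoke Eq.~\eqref{eq:interpretacioinfo}, which rewrites $S_{\m}(\hat\rho)=\ln\dim\mathcal{H}-D_{\text{KL}}[p_i\|p_i^{\text{id}}]$ with $p_i^{\text{id}}=V_i/\dim\mathcal{H}$, and then apply Gibbs' inequality~\eqref{eq:gibbs}: $D_{\text{KL}}[p_i\|p_i^{\text{id}}]\geq 0$, with equality if and only if $p_i=p_i^{\text{id}}$ for every $i$. This yields both $S_{\m}(\hat\rho)\leq\ln\dim\mathcal{H}$ and the equality condition~\eqref{eq:equalId} at once. (Note that routing the upper bound through coarseness, e.g.\ via the trivial one-outcome measurement $\{\id\}$, would only produce a sufficient condition for equality, not the sharp one stated, so Gibbs is the right tool here.)

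For the lower bound $S_{\text{vN}}(\hat\rho)\leq S_{\m}(\hat\rho)$, the idea is to realize both sides as the two ends of a classical coarsening and then apply Theorem~\ref{thm:observationalmonotonous}. Fix an orthonormal eigenbasis $\hat\rho=\sum_k\lambda_k\ketbra{k}{k}$ and set $P_{ik}=\bra{k}\po_i\ket{k}$. Then $P$ is left stochastic, since $\sum_i P_{ik}=\bra{k}\id\ket{k}=1$, and $p_i=\sum_k P_{ik}\lambda_k$, $V_i=\sum_k P_{ik}\cdot 1$. Since $S_{\text{obs}}(\lambda_k,1)=-\sum_k\lambda_k\ln\lambda_k=S_{\text{vN}}(\hat\rho)$ and $S_{\text{obs}}(p_i,V_i)=S_{\m}(\hat\rho)$, Theorem~\ref{thm:observationalmonotonous} gives exactly $S_{\text{vN}}(\hat\rho)\leq S_{\m}(\hat\rho)$, with equality if and only if $P_{ik}(\lambda_k V_i-p_i)=0$ for all $i,k$.

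The equality case~\eqref{eq:equalVN} is where coarseness enters. For the $\Leftarrow$ direction: a direct computation on the eigenprojectors of $\hat\rho$ (with $p_a=\lambda_a\Tr\hat P_a$ and $V_a=\Tr\hat P_a$) shows $S_{\m_{\hat\rho}}(\hat\rho)=S_{\text{vN}}(\hat\rho)$; then if $\m_{\hat\rho}\coarser\m$, Theorem~\ref{thm:CoarserObservational} gives $S_{\text{vN}}(\hat\rho)=S_{\m_{\hat\rho}}(\hat\rho)\geq S_{\m}(\hat\rho)$, which together with the lower bound forces equality. For the $\Rightarrow$ direction: from the equality condition above, $\bra{k}\po_i\ket{k}(\lambda_k V_i-p_i)=0$; since $\po_i\geq 0$, a vanishing diagonal entry $\bra{k}\po_i\ket{k}=0$ implies $\po_i\ket{k}=0$, so (when $\po_i\neq 0$) every eigenvector not annihilated by $\po_i$ has the same eigenvalue $\lambda_k=p_i/V_i$, i.e.\ each $\po_i$ is supported inside a single eigenspace $\hat P_{a(i)}$ of $\hat\rho$. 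Sandwiching $\sum_i\po_i=\id=\sum_a\hat P_a$ between $\hat P_a$'s then gives $\sum_{i:\,a(i)=a}\po_i=\hat P_a$, so the $0/1$ matrix $P_{ai}=\delta_{a,a(i)}$ is left stochastic and witnesses $\m_{\hat\rho}\coarser\m$ through Eq.~\eqref{eq:coarser}.

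The main obstacle is this last support argument together with its bookkeeping: one must handle $\po_i=0$ (assign $a(i)$ arbitrarily) and, when $\hat\rho$ is not full rank, the elements with $p_i=0$ supported on $\ker\hat\rho$, and then verify that the outcome sets $\{i:a(i)=a\}$ genuinely reassemble the eigenprojectors $\hat P_a$. Everything else reduces to a one-line appeal to Theorems~\ref{thm:observationalmonotonous} and~\ref{thm:CoarserObservational}, to Gibbs' inequality, or to the short identity $S_{\m_{\hat\rho}}(\hat\rho)=S_{\text{vN}}(\hat\rho)$.
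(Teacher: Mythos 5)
Your proof is correct and follows essentially the same strategy as the paper: realize $S_{\text{vN}}$ and $\ln\dim\mathcal{H}$ as observational entropies connected to $S_{\m}$ by a left stochastic matrix, apply Theorem~\ref{thm:observationalmonotonous} (equivalently Gibbs' inequality for the upper bound), and read off the equality cases from its equality condition, with your support argument for the $\Rightarrow$ direction of Eq.~\eqref{eq:equalVN} mirroring the paper's Appendix~\ref{apx:equality}. The only cosmetic differences are that you coarse-grain from the rank-one eigenbasis (all volumes equal to $1$) rather than from the eigenprojectors of $\hat\rho$, and that you derive the upper bound directly from Eq.~\eqref{eq:interpretacioinfo} and Gibbs' inequality rather than from $\m_{\hat 1}\coarser\m$ plus the equality criterion of Theorem~\ref{thm:observationalmonotonous}; the two routes are equivalent.
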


\begin{proof}
The first step is to convert $\ln \dim \mathcal{H}$ and $S_{\text{vN}}(\hat\rho)$ into observational entropies, where the coarseness inequalities can be used. This step is accomplished realizing that
\begin{equation}
    S_{\text{vN}}(\hat\rho) = S_{\m_{\hat\rho}}(\hat\rho),\quad \ln \dim\mathcal{H} = S_{\m_{\hat 1}}(\hat\rho),  \label{eq:intermsObs}
\end{equation}
where the $\hat 1$ in $\m_{\hat 1}$ is the identity operator.

Next, let us $S_{\m_{\hat\rho}}(\hat\rho) \leq S_{\m}(\hat\rho)$. Start decomposing \mbox{$\hat\rho = \sum_\rho \rho \hat P_{\rho}$}, then 
\begin{align}
    p_i=\sum_\rho P_{i\rho}p^{(\hat\rho)}_\rho,\quad  V_i=\sum_\rho P_{i\rho}V^{(\hat\rho)}_\rho,\label{eq:coarserpoint}
\end{align}
where $P_{i\rho} = \Tr\{\po_i\hat P_{\rho}\}/V_\rho^{(\hat\rho)}$. These identities imply $S_{\m}(\hat\rho)\leq S_{\m_{\hat\rho}}(\hat\rho)$, due to theorem \ref{thm:observationalmonotonous}. The equality case requires a longer argument provided in appendix \ref{apx:equality}. 

Lastly, to prove $S_{\m}(\hat\rho) \leq S_{\m_{\hat 1}}(\hat\rho)$, check
\begin{equation}
    \m_{\hat 1}\coarser\m, \text{ with } P_{1i} = 1
\end{equation} 
and use theorem \ref{thm:CoarserObservational}.
Moreover, the probabilities for $\hat\rho$ and for $\hat\rho^{\text{id}}$ fulfill the hypotheses of theorem \ref{thm:observationalmonotonous}, implying that $S_{\m}(\hat\rho) = S_{\m_{\hat 1}}(\hat\rho)$ happens if and only if 
\begin{equation}
    P_{ji}p_{i}V_j=P_{ji}V_ip_j.
\end{equation}
This equation becomes the required $p_{i}\dim \mathcal{H}=V_i$ after substituting the known values.
\end{proof}

The equivalence in Eq.~\eqref{eq:equalVN} is remarkably useful, allowing to quickly check whether $\ma \coarser \mb$ for any projective $\ma$. The check is based in using any state $\hat\rho_2$ with eigenspaces defined by the projectors of $\ma$, equivalently, \mbox{$\m_{\hat\rho_2}=\ma$}. Then, computing whether \mbox{$S_{\text{vN}}(\hat\rho_2)=S_{\mb}(\hat\rho_2) $} is enough to learn whether $\ma \coarser \mb$.

The remaining property of observational entropy is about consecutive measurements. Consider an observer with the state $\hat\rho$ that performs first $\mb$ and then $\ma$. This new measurement is denoted by $(\mb,\ma)$ and will have Kraus operators $\{\hat K^{(1)}_{in} \hat K^{(2)}_{jm}\}$. Then,
\begin{theorem} Performing additional measurements makes the measurement finer,
\begin{equation}
    (\mb,\ma)\finer\mb.
\end{equation}
Therefore, due to theorem \ref{thm:CoarserObservational}, performing additional measurements cannot increase observational entropy,
\begin{equation} 
    S_{(\mb,\ma)} \leq S_{\mb},
\end{equation}
and the equality is achieved if and only if $(\mb,\ma)\coarser\mb$.
\end{theorem}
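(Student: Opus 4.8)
The plan is to establish $(\mb,\ma)\finer\mb$ directly from Definition~\ref{def:coarser}, i.e.\ by exhibiting a left stochastic matrix $P$ such that $\pob_i = \sum_{(j,m)} P_{i,(jm)}\,\po^{(\mb,\ma)}_{(jm)}$, where the outcomes of the composite measurement are labelled by pairs — here I will write the composite POVM elements in terms of the Kraus operators $\{\hat K^{(1)}_{in}\hat K^{(2)}_{jm}\}$. The natural candidate is simply to sum over the second measurement's outcome: the composite POVM element for outcome $(i,j)$ is $\po^{(\mb,\ma)}_{(ij)} = \sum_{m,n}\hat K^{(2)\dagger}_{jm}\hat K^{(1)\dagger}_{in}\hat K^{(1)}_{in}\hat K^{(2)}_{jm}$, and since $\sum_{j,m}\hat K^{(2)\dagger}_{jm}\hat K^{(2)}_{jm}=\id$ by the normalization of Kraus operators, summing over $j$ gives $\sum_j \po^{(\mb,\ma)}_{(ij)} = \sum_{n}\hat K^{(1)\dagger}_{in}\big(\sum_{j,m}\hat K^{(2)\dagger}_{jm}\hat K^{(2)}_{jm}\big)\hat K^{(1)}_{in} = \sum_n \hat K^{(1)\dagger}_{in}\hat K^{(1)}_{in} = \pob_i$. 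So I would take $P_{i,(jk)} = \delta_{ik}$, which is manifestly left stochastic, and this realizes $\mb\coarser(\mb,\ma)$, that is $(\mb,\ma)\finer\mb$, via Eq.~\eqref{eq:coarser}.

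Once $(\mb,\ma)\finer\mb$ is established, the entropy inequality $S_{(\mb,\ma)}\le S_{\mb}$ for all $\hat\rho$ is immediate from Theorem~\ref{thm:CoarserObservational} applied with the roles of coarser/finer as stated there (coarser has larger observational entropy, so the finer measurement $(\mb,\ma)$ has the smaller one). The equality characterization also comes straight from Theorem~\ref{thm:CoarserObservational}: equality for all $\hat\rho$ holds iff the two measurements are mutually coarse-grainings of each other, i.e.\ iff additionally $(\mb,\ma)\coarser\mb$, which is exactly the stated condition.

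The only genuine subtlety — and the step I'd flag as the main obstacle — is bookkeeping the outcome labels of the composite measurement correctly and checking that the relevant normalization identity for $\{\hat K^{(2)}_{jm}\}$ is the one that actually holds. One must be careful that $\{\hat K^{(2)}_{jm}\}$ are the Kraus operators of the instrument $\ma$ as applied \emph{after} $\mb$, so that $\sum_{j,m}\hat K^{(2)\dagger}_{jm}\hat K^{(2)}_{jm}=\id$ holds on the whole Hilbert space (it does, since $\ma$ is a legitimate generalized measurement); then the contraction in the middle of the composite POVM element goes through. A secondary point worth a sentence is that the composite object $(\mb,\ma)$ with elements $\sum_{m,n}\hat K^{(2)\dagger}_{jm}\hat K^{(1)\dagger}_{in}\hat K^{(1)}_{in}\hat K^{(2)}_{jm}$ is indeed a valid POVM (sums to $\id$), which follows by summing over \emph{both} $i$ and $j$ and using the normalization of $\{\hat K^{(1)}_{in}\}$ as well. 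Everything else is a direct invocation of the earlier results, so the proof should be short.
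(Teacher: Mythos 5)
Your strategy is the same as the paper's: exhibit the left stochastic matrix $P_{i',(ij)}=\delta_{i'i}$ that marginalizes the composite POVM over the outcome of the second measurement, and then invoke Theorem \ref{thm:CoarserObservational} for both the entropy inequality and its equality case. That second half is correct and matches the paper exactly. The problem is in the key computation, precisely at the point you flagged as "the main obstacle" and then resolved incorrectly: the issue is not whether $\sum_{j,m}\hat K^{(2)\dagger}_{jm}\hat K^{(2)}_{jm}=\id$ holds (it does), but whether the $\hat K^{(2)}$ factors sit adjacent in the middle of the product so that this identity can be used at all.

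You write the composite POVM element as $\sum_{m,n}\hat K^{(2)\dagger}_{jm}\hat K^{(1)\dagger}_{in}\hat K^{(1)}_{in}\hat K^{(2)}_{jm}$, i.e.\ you take the composite Kraus operator to be $\hat K^{(1)}_{in}\hat K^{(2)}_{jm}$, with the $\ma$ factor acting first on the state. With that ordering, summing over $j$ yields
\begin{equation}
\sum_{j}\sum_{m,n}\hat K^{(2)\dagger}_{jm}\hat K^{(1)\dagger}_{in}\hat K^{(1)}_{in}\hat K^{(2)}_{jm}=\sum_{j,m}\hat K^{(2)\dagger}_{jm}\,\pob_{i}\,\hat K^{(2)}_{jm},
\end{equation}
the Heisenberg-picture image of $\pob_i$ under the channel of $\ma$, which is \emph{not} $\pob_i$ in general; the $\hat K^{(2)}$ factors are outermost, so the normalization identity cannot be applied. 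The expression you equate it to, $\sum_n\hat K^{(1)\dagger}_{in}\bigl(\sum_{j,m}\hat K^{(2)\dagger}_{jm}\hat K^{(2)}_{jm}\bigr)\hat K^{(1)}_{in}$, is obtained by silently reordering non-commuting operators; it is the sum over $j$ of the POVM elements built from the \emph{other} composite Kraus operator $\hat K^{(2)}_{jm}\hat K^{(1)}_{in}$. That other ordering is the physically correct one for "perform $\mb$ first, then $\ma$" (operators compose right to left, despite the loose set notation in the theorem's setup), and it is what the paper uses: $\po^{(1,2)}_{ij}=\sum_{m,n}\hat K^{(1)\dagger}_{in}\hat K^{(2)\dagger}_{jm}\hat K^{(2)}_{jm}\hat K^{(1)}_{in}=\sum_n\hat K^{(1)\dagger}_{in}\poa_j\hat K^{(1)}_{in}$, whence $\sum_j\po^{(1,2)}_{ij}=\pob_i$. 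The distinction is not cosmetic: marginalizing over a measurement performed \emph{before} $\mb$ does not reproduce $\pob_i$, because the earlier measurement disturbs the state. Fixing the order of the Kraus factors repairs the argument and makes it identical to the paper's proof.
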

\begin{proof}
Express the POVM elements in terms of the Kraus operators,
\begin{align}
    \pob_{i} &= \sum_n K_{in}^{(1) \dagger} K_{in}^{(1)}, \nonumber\\
    \po^{(1,2)}_{ij} & = \sum_{m,n}   K_{in}^{(1)\dagger} K_{jm}^{(2)\dagger} K_{jm}^{(2)}K_{in}^{(1)}  \nonumber \\
    &= \sum_n  K_{in}^{(1)\dagger} \poa_{j} K_{in}^{(1)}.
\end{align}
Then, using $\sum_j\pob_j = \id$,
\begin{equation}
    \pob_{i} = \sum_j \po^{(1,2)}_{ij},
\end{equation}
which proves $(\mb,\ma)\finer \mb$. This proof can be understood intuitively as obtaining $\mb$ from $(\mb,\ma)$ by throwing away the outcomes of $\ma$.
\end{proof}
Therefore, performing more measurements never increases the observational entropy, which successfully captures that performing more measurements never reduces the known information.

\section{Conclusion}
Performing coarser measurements provides less information. For projective measurements in quantum systems, the notion of coarseness is established. However, this notion is not as established for the most general type of measurements, POVM measurements. The present article proposes a definition of coarser that extends previous ones, applies to POVM measurements and more generally to any measurements beyond quantum mechanics. This definition is based on the idea that processing can only make measurements coarser. Specifically, observers can process the outcomes of their measurements, but doing so does not provide new information about the measured system. This connection can be translated to a relation between the POVM elements of coarser and finer measurements, given in Eq.~\eqref{eq:coarser}.

Two important measures of information are monotonic with how much coarse are the measurements. Observational entropy quantifies the unknown information when only one measurement is available and increases with coarser measurements. Conversely, mutual information quantifies the correlations between the state and the measurement outcomes and decreases with coarser measurements. These behaviours reflect that coarser measurements extract less information, supporting the presented definition of coarser. 

On the one hand, quantum information benefits from having a definition of coarser, allowing to compare different measurement apparatus. Especially, being coarser in subspaces of states describes measuring apparatus that have their best accuracy on different ranges. In the future, measurements in infinite dimensional systems could be explored, seeing if the properties of coarser measurements change in such systems. Another direction worth exploring is the connection between coarser measurements and causal models. This connection comes from the idea that the processing of measurement outcomes, which is used to define coarser, can be thought of as a causal model.

On the other hand, quantum thermodynamics is connected to coarser measurements through observational entropy. Observational entropy has recently been put forward as candidate for thermodynamic entropy, in the context of quantum thermodynamics \cite{PRXQuantum.2.030202,safranek2019thermRapid,brief2021,safranek2019therm,strasberg2019,classical2020}. The properties of observational entropy can be more easily derived using coarser measurements, as shown in section \ref{sec:obsEntropyProp}. Moreover, all the measures of information that fulfill data processing inequalities are monotonic with this definition of coarser. These measures include mutual information, Rényi entropies, relative entropy and other f-divergences. Moreover, this behaviour is reminiscent of the second law of thermodynamics, where entropy is monotonic. Altogether, the coarseness of measurements becomes a tool to characterize the behaviour of information and, in turn, develop the thermodynamics of classical and quantum systems.

\acknowledgments
The author gratefully acknowledges José Polo-Gómez for his helpful comments on the final version of this paper. The author also thanks Leonardo Lessa, Eduardo Martín-Martínez, José Polo-Gómez and Tales Rick Perche for insightful discussions. The project that lead to the presented results received the support of a fellowship from ``la Caixa” Foundation (ID 100010434, with fellowship code LCF/BQ/EU21/11890119) and of Eduardo Martín-Martínez's funding through the NSERC Discovery program as well as his Ontario Early Researcher Award.

\newpage 

\appendix

\section{Deriving Properties of Observational Entropy for Generalized Measurements}

\subsection{Failure of the Relation between Observational and von Neumann Entropies}
\label{apx:failedvNrel}
This appendix subsection shows that Eq.~\eqref{eq:interpret}, while valid for projective measurements, is not valid for generalized measurements. Using the same equation, but replacing the $\hat P_i$ by general POVM elements $\po_i$ does not work. Moreover, one could expect to at least maintain the information-theoretic interpretation of observational entropy previously deduced from this equation, but that is not the case either. These claims are proven using as counterexample
\begin{align}
    &\po_1= \frac{1}{2}\ketbra{0}{0},\ \po_2 = \frac{1}{2}\ketbra{0}{0} + \ketbra{1}{1}, \ \hat \rho  = \ketbra{0}{0},\nonumber\\
    &\text{where}\quad p_1 = \frac{1}{2},\ V_1=\frac{1}{2},\ p_2 = \frac{1}{2},\ V_2=\frac{3}{2}. \label{eq:counterapxA}
\end{align}

First, assuming that $\hat P_i$ had to be replaced by $\po_i$, Eq.~\eqref{eq:interpret} would become
\begin{align}
    S_{\m}(\hat\rho) &= S_{\text{vN}}\left(\sum_{i} p_i \frac{\po_i}{V_i}\right)\\
    \frac{1}{2}\ln\left(\frac{3}{2}\right) &= S_{\text{vN}}\left( \frac{2}{3}\ketbra{0}{0} + \frac{1}{3}\ketbra{1}{1} \right)\\
    \frac{1}{2}\ln\left(\frac{3}{2}\right) &= \frac{2}{3}\ln\left(\frac{3}{2}\right) + \frac{1}{3}\ln 3 \\
    \ln 2 &= 3\ln 3,
\end{align}
which is false. Therefore, Eq.~\eqref{eq:interpret} cannot be extended to generalized measurements $\po_i$ through replacing the $\hat P_i$.

Second, the information-theoretic interpretation that came from Eq.~\eqref{eq:interpret} also fails. This interpretation can be expressed as
\begin{equation}
    S_{\m}(\hat\rho) = S_{\text{vN}}(\hat\rho_{\text{est}}),\label{eq:interpfail}
\end{equation}
where $\hat\rho_{\text{est}}$ is the best estimation of $\hat\rho$ given that the $p_i$ and $\po_i$ are known. Interestingly, in general
\begin{equation}
    \hat\rho_{\text{est}} \neq \sum_{i} p_i \frac{\po_i}{V_i},
\end{equation}
even though the equality is true when the $\po_i$ are projectors. This failure of the equality happens for the counterexample in Eq.~\ref{eq:counterapxA}, for which only $\hat\rho=\ketbra{0}{0}$ fulfills $p_1=1/2$. Therefore, $\hat\rho_{\text{est}} = \ketbra{0}{0}$, which is different from
\begin{equation}
    \sum_{i} p_i \frac{\po_i}{V_i} = \frac{2}{3}\ketbra{0}{0} + \frac{1}{3}\ketbra{1}{1},
\end{equation}
as we wanted to see. Moreover, $\hat\rho_{\text{est}} = \ketbra{0}{0}$ also implies that the interpretation in  Eq.~\eqref{eq:interpfail} fails because $ S_{\m}(\hat\rho)\neq0$, while $S_{\text{vN}}(\hat\rho_{\text{est}})=0$. In words, the observational entropy is non-zero, even though the $p_i$ tell us everything about $\hat\rho$ in the present counterexample. These results justify the need to modify the information-theoretic interpretation for $S_{\m}$, as done in section \ref{sec:background}.

\subsection{Proof of Theorem \ref{thm:observationalmonotonous}}
\label{apx:dpiobs}
\begin{proof}
Consider $I$ and $J$ such that $I\to J$ with transition matrix $P$. If $p_i$ and $p_i^{\text{id}}$ are distributions of $I$, then $p_j$ and $p_j^{\text{id}}$ are the corresponding distributions of $J$. This claim is true due to Eq.~\eqref{eq:probrel}, together with $V_i/V_{\text{tot}}=p_i^{\text{id}}$, $p_{j}=\sum_i p_{ij}$ and $p_{j}^{\text{id}}=\sum_i p^{\text{id}}_{ij}$. Then, using $I\to J$, the strong data processing inequality (theorem \ref{thm:dataProcStr}) implies that $S_{\text{obs}}$ is non-decreasing, due to
\begin{equation}
    S_{\text{obs}}(p_i,V_i) = \ln V_\text{tot} - D_{\text{KL}}[p_i||p_i^{\text{id}}].
\end{equation}
Moreover, theorem \ref{thm:dataProcStr} also implies that $S_{\text{obs}}$ remains constant if and only if 
\begin{align}
    p_{ij}=p^{\text{id}}_{i|j}p_j.
\end{align}
Equivalently, using that $p^{\text{id}}_{i|j}V_j=p^{\text{id}}_{j|i}V_i$ and that $p_{j|i}=p^{\text{id}}_{j|i}$ and assuming that the $V_j$ are non-zero,
\begin{equation}
    p_{j|i}p_iV_j=p_{j|i}V_ip_j,
\end{equation}
which is the condition given for the equality case, proving the theorem for finite $V_\text{tot}$. The theorem holds even if $V_\text{tot}$ is infinite, as shown from Jensen's inequality in the following appendix \ref{apx:InfiniteVtot}.
\end{proof}

\subsection{Proving Theorem \ref{thm:observationalmonotonous} for Systems of Infinite Volume}
\label{apx:InfiniteVtot}
The theorem \ref{thm:observationalmonotonous} for finite $V_\text{tot}$ is proven above, but a derivation that is valid even for infinite $V_{\text{tot}}$ is provided next. This derivation is mainly based on the proof of theorem 3 in \cite{safranek2021information}, where Jensen's inequality is also central.
\begin{theorem} {\normalfont (Jensen's inequality)} Let $f$ be a strictly concave function, consider any weights $a_i$ , $0 \leq a_i \leq 1$, $\sum_i a_i = 1$, and any values $x_i\in \dom(f)$. Then,
\begin{equation}
    f\left(\sum_i a_i x_i\right) \geq \sum_i a_i f(x_i).
\end{equation}
The equality is true if and only if $x_i = x$, $\forall i\, |\, a_i\neq 0$, where $x$ is a constant value.
\label{thm:Jensen}
\end{theorem}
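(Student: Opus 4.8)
The plan is to prove both the inequality and the equality characterization by induction on the number $n$ of indices $i$ with $a_i \neq 0$, discarding the zero-weight terms at the outset and relabelling so that $a_1,\dots,a_n > 0$ with $\sum_{i=1}^n a_i = 1$. The only input needed is the two-point form of strict concavity — $f(t u + (1-t) v) \geq t f(u) + (1-t) f(v)$ for $t \in [0,1]$, with strict inequality whenever $t \in (0,1)$ and $u \neq v$ — together with the fact that $\dom(f)$ is an interval, so that any convex combination of points of $\dom(f)$ again lies in $\dom(f)$ and $f$ is defined there. The case $n = 1$ is trivial and $n = 2$ is exactly the two-point definition (with strictness iff $x_1 \neq x_2$, matching the claimed equality condition since then $a_1 \in (0,1)$).

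For the inductive step, with $n \geq 2$, I would set $s = 1 - a_1 = \sum_{i=2}^n a_i$ and, assuming $s > 0$ (if $s = 0$ we are back in the $n=1$ case), $y = \sum_{i=2}^n (a_i/s)\, x_i \in \dom(f)$, which is a convex combination to which the induction hypothesis applies. Then
\begin{align}
    f\Big(\sum_{i=1}^n a_i x_i\Big) &= f\big(a_1 x_1 + s\, y\big) \geq a_1 f(x_1) + s\, f(y) \nonumber\\
    &\geq a_1 f(x_1) + s \sum_{i=2}^n \frac{a_i}{s} f(x_i) = \sum_{i=1}^n a_i f(x_i),
\end{align}
the first inequality being two-point concavity and the second the induction hypothesis.

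The forward direction of the equality case is immediate: if all $x_i$ with $a_i \neq 0$ equal a common $x$, both sides reduce to $f(x)$. For the converse I would assume equality and that the $x_i$ with $a_i \neq 0$ are not all equal, and show the chain above is then strict. Either $x_1 \neq y$, in which case the first inequality is strict since $a_1, s \in (0,1)$; or $x_1 = y$, in which case the $\{x_i\}_{i \geq 2}$ are not all equal to $x_1$ (otherwise all $n$ values would coincide), so the induction hypothesis forces the second inequality to be strict. Either way equality fails, a contradiction. The proof is essentially routine; the one place needing care is precisely this converse, where one must notice that $x_1 = y$ does \emph{not} by itself force $x_1 = x_i$ for $i \geq 2$, so the strictness genuinely has to come from the induction hypothesis applied to the sub-collection. (An alternative route is to use a supporting line of $f$ at $\bar x = \sum_i a_i x_i$, turning the inequality into a single step, but it requires separately establishing the existence of such a line and that it touches $f$ only at $\bar x$; the induction avoids this and needs no regularity assumption on $f$.)
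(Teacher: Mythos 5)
The paper does not actually prove this theorem: Jensen's inequality is stated in the appendix on infinite-volume systems as a standard, citable fact and is then \emph{used} (with the equality case doing real work in characterizing when $S_{\text{obs}}$ is preserved), so there is no in-paper argument to compare against. Your induction proof is the classical one and is correct: discarding zero weights, reducing to the two-point definition of strict concavity, and splitting off $a_1$ against the renormalized tail $y=\sum_{i\ge 2}(a_i/s)x_i$ all go through, provided (as you note) $\dom(f)$ is an interval so that $y\in\dom(f)$ — true for the paper's application $f(x)=-x\ln x$ on $[0,\infty)$. The converse of the equality case is also handled correctly, and you rightly identify the one delicate point: when $x_1=y$, strictness must come from the induction hypothesis applied to $\{x_i\}_{i\ge 2}$, and your parenthetical observation (if the tail values were mutually equal to some $c$, then $y=c=x_1$ and all $n$ values would coincide) is exactly what makes that step legitimate. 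One cosmetic remark: after relabelling so that all $a_i>0$ with $n\ge 2$, the case $s=0$ cannot occur, so that caveat is redundant. Your supporting-line aside is also accurate — that route gives a one-step proof but needs the existence of a supporting line that meets the graph only at $\bar x$, which for a merely strictly concave $f$ requires a separate (if standard) argument, whereas the induction needs nothing beyond the two-point definition.
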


Then, the proof of theorem \ref{thm:observationalmonotonous} goes as follows. Consider the hypotheses
\begin{align}
     p_j &= \sum_i P_{ji} p_i,\  V_j = \sum_i P_{ji} V_i, \label{eq:hypotheses}
\end{align}
for a $P$ left stochastic. Moreover, let $f(x) = -x\ln x$, which is strictly concave and $\dom(f)=[0,\infty)$. Then,
\begin{align}
    &\!\!S_{\text{obs}}(p_j,V_j) \nonumber\\&= -\sum_j p_j\ln\frac{p_j}{V_j}\nonumber\\
    &= \sum_j V_j f\left(\frac{p_j}{V_j}\right)\nonumber\\
    &= \sum_j V_j f\left(\sum_i \frac{P_{ji}V_i}{V_j} \frac{p_i}{V_i}\right)\nonumber\\
    &\geq \sum_j V_j \sum_i \frac{P_{ji}V_i}{V_j} f\left(\frac{p_i}{V_i}\right)\nonumber\\
    &= \sum_i V_i f\left(\frac{p_i}{V_i}\right) \nonumber\\
    &=  S_{\text{obs}}(p_j,V_j),
\end{align}
due to using the hypotheses and using Jensen's inequality with
\begin{equation}
    a_{ji} = \frac{P_{ji}V_i}{V_j},\quad x_i = \frac{p_i}{V_i},
\end{equation}
which fulfill the requirements, stated in theorem \ref{thm:Jensen}. Therefore, 
\begin{align}
    &p_j = \sum_i P_{ji} p_i,\ V_j = \sum_i P_{ji} V_i\nonumber\\
    &\implies S_{\text{obs}}(p_j,V_j)\geq S_{\text{obs}}(p_i,V_i). \label{eq:monotony}
\end{align}
Moreover, the equality is achieved if and only if
\begin{equation}
    x_i = c_j ,\  \forall i\, |\,  a_{ji} \neq 0, \label{eq:equalcond}
\end{equation} 
where $c_j$ is a constant independent of $i$. Equivalently, $a_{ji}c_j = a_{ji}x_i$, which allows to compute $c_j$, 
\begin{equation}
    c_j = \sum_i a_{ji} c_j = \sum_i \frac{P_{ji}V_i}{V_j} x_i = \sum_i \frac{P_{ji}p_i}{V_j} = \frac{p_j}{V_j}.
\end{equation}
Substituting everything into $a_{ji}c_j = a_{ji}x_i$ and assuming that the volumes are non-zero, then
\begin{align}
    &p_j = \sum_i P_{ji} p_i,\ V_j = \sum_i P_{ji} V_i, \ P_{ji}p_i V_j = P_{ji} p_j V_i  \nonumber \\ &\iff S_{\text{obs}}(p_j,V_j)=S_{\text{obs}}(p_i,V_i),
\end{align}
completing the proof.

\subsection{Counterexample for the Converse of Theorem \ref{thm:observationalmonotonous}}
\label{apx:counterexampleConverse}
The main result of theorem \ref{thm:observationalmonotonous} is an implication whose converse is false, i.e.
\begin{align}
    &\pa_j = \sum_i P_{ji} \pb_i, 
    \ \Va_j = \sum_i P_{ji} \Vb_i \nonumber \\
    \centernot \impliedby & S_{\text{obs}}(\pVa) \geq S_{\text{obs}}(\pVb). \label{eq:falseconverse}
\end{align} 
There is a simple counterexample, already for measurements of two outcomes, which proves that this converse is false, 
\begin{align}
    &\pb_1 = \frac{3}{4},\ \Vb_1 = 1. \nonumber \\
    &\pa_1 = 1,\ \Va_1 = \frac{9}{5}, \label{eq:counterexample}
\end{align}
Here, only output $1$ appears, because the probabilities and volumes of the output $2$ are determined by the probabilities adding to one and the volumes adding to $V_\text{tot}=2$. Next, computing $S_{\text{obs}}$, 
\begin{equation}
    0.6 \approx S_{\text{obs}}(\pa_j,\Va_j) >  S_{\text{obs}}(\pb_i,\Vb_i) \approx 0.2.
\end{equation}
However, 
\begin{align}
     \pa_j &= \sum_i P_{ji} \pb_i,\ \Va_j = \sum_i P_{ji} \Vb_i
\end{align}
cannot be fulfilled. Arguing by contradiction, assume that an adequate $P_{ji}$ can be found. Then, $\pa_1 = 1$ needs that $\pa_1 = \pb_1 + \pb_2$, which implies $1.8 =\Va_1 = \Vb_1 + \Vb_2 = 2$, reaching a contradiction. In general, for any given $(\pb_1, \Vb_1)$, there are many more $(\pa_1, \Va_1)$ that can serve as counterexample. These counterexamples are visualized in the Figure \ref{fig:regions} and any of them is enough to show Eq.~\eqref{eq:falseconverse}.

\begin{figure}[ht]
    \centering
    \includegraphics[scale=0.42]{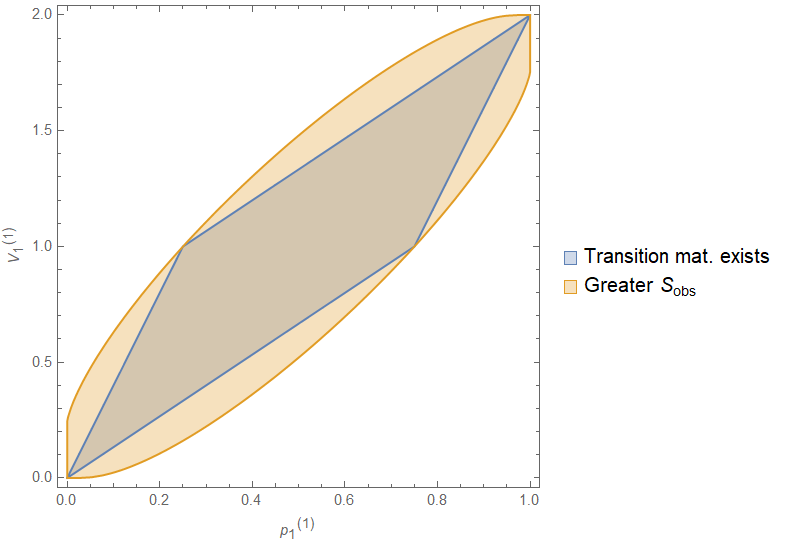}
    \label{fig:regions}
    \caption{Given $\pb_1 = 0.75,\ \Vb_1 = 1$, regions of the $\pa_1,\Va_1$ square for which observational entropy is greater (orange) and for which the conditions in Eq.~\eqref{eq:probrel} can be fulfilled (blue). Notably, these regions are different, allowing to find counterexamples to the converse of theorem \ref{thm:observationalmonotonous} main implication.}
\end{figure}

Moreover, the counterexample described in Eq.~\eqref{eq:counterexample} can arise from generalized measurements on a two-dimensional quantum system. The adequate probabilities and volumes are obtained from the state
\begin{equation}
    \ket{\psi} = \frac{\sqrt{3}}{2}\ket{0} + \frac{1}{2}\ket{1},\quad \hat\rho = \ket{\psi}\bra{\psi},
\end{equation}
together with the measurements
\begin{align}
    \poa_1 &= \ket{\psi}\!\bra{\psi}+\frac{4}{5}\ket{\psi^\perp}\!\bra{\psi^\perp}, \ \poa_2 = \frac{1}{5}\ket{\psi^\perp}\!\bra{\psi^\perp},\nonumber\\
    \pob_1 &= \ket{0}\!\bra{0},\ \pob_2 = \ket{1}\!\bra{1},
\end{align}
where $\ket{\psi^\perp}$ is a state orthogonal to $\ket{\psi}$.

\subsection{Proving when Observational and von Neumann Entropies Coincide}
\label{apx:equality}

This section completes the proof of theorem \ref{thm:propObs}, by demonstrating Eq.~\eqref{eq:equalVN}. The already proven Eq.~\eqref{eq:coarserpoint} allowed to apply theorem \eqref{thm:observationalmonotonous} to show $S_{\m_{\hat\rho}}(\hat\rho) \leq S_{\m}(\hat\rho)$ in the main text. The same theorem states that the equality $S_{\m_{\hat\rho}}(\hat\rho)= S_{\m}(\hat\rho)$ occurs when
\begin{equation}
    \frac{p_\rho^{(\hat\rho)}}{V_\rho^{(\hat\rho)}} =\frac{p_i}{V_i} ,\  \forall P_{i\rho} \neq 0 \label{eq:condequal}.
\end{equation}
This condition allows to prove Eq.~$\eqref{eq:equalVN}$ in a similar way to part of the proof of theorem 1 in the reference \cite{safranek2021information}. Consider the sets
\begin{equation}
    I_{\rho} = \{i | P_{i\rho} \neq 0 \},
\end{equation}
which are disjoint because 
\begin{equation}
    \exists i \in I_{\rho}\cap I_{\rho'} \implies \rho = \frac{p_\rho^{(\hat\rho)}}{V_\rho^{(\hat\rho)}} =  \frac{p_i}{V_i} = \frac{p_{\rho'}^{(\hat\rho)}}{V_{\rho'}^{(\hat\rho)}} = \rho'.
\end{equation}
If $i\in I_{\rho}$, then $P_{i\rho'} = 0$, $\forall\rho'\neq\rho$, which implies $\Tr\{\po_i\hat P_{\rho'}\}=0$, and using the lemma \ref{eq:trequality}, $\po_i\hat P_{\rho'}=0$. Consequently,
\begin{equation}
    \po_i = \po_i\sum_{\rho''} \hat P_{\rho''} = \po_i \hat P_\rho.
\end{equation}
Moreover, combining $\po_i\hat P_{\rho'}=0$ and $\po_i = \po_i \hat P_\rho$,
\begin{equation}
    \hat P_\rho = \sum_i \po_i \hat P_{\rho} = \sum_{i\in I_{\rho}} \po_i \hat P_{\rho} = \sum_{i\in I_{\rho}} \po_i,
\end{equation}
which means $\m_{\hat\rho}\coarser\m$, and in turn, $S_{\m_{\hat\rho}}(\hat\rho) \geq S_{\m}(\hat\rho)$ by theorem \ref{thm:CoarserObservational}. Combining all the results with \mbox{$S_{\text{vN}}(\hat\rho) = S_{\m_{\hat\rho}}(\hat\rho)$} from Eq.~\eqref{eq:intermsObs} provides the final outcome
\begin{equation}
    S_{\text{vN}}(\hat\rho) = S_{\m}(\hat\rho) \iff \m_{\hat\rho}\coarser\m.
\end{equation}

\section{Proving the Data Processing Inequalities}
\label{apx:dpi}

\subsection{Proof of Theorem \ref{thm:dataProcStr}}
\label{apx:dpistr}

\begin{proof} The proof consists of using $I\to J$ first and Gibbs' inequality second, to derive
\begin{align}
    D_{\text{KL}}[p_i||q_i] &= D_{\text{KL}}[p_{ij}||q_{ij}] \nonumber \\
    &= \sum_{ij} p_{ij} \ln\frac{p_{i|j}p_j}{q_{i|j}q_j} \nonumber \\ 
    &= D_{\text{KL}}[p_j||q_j] + \sum_{i} p_j D_{\text{KL}}[p_{i|j}||q_{i|j}] \nonumber \\ 
    &\geq D_{\text{KL}}[p_j||q_j].
\end{align}
The equality in the first line comes from $p_{j|i} = q_{j|i}$, which is true due to $I\to J$. The last line comes from $D_{\text{KL}}[p_{i|j}||q_{i|j}] \geq 0$ and the equality is achieved if and only if $D_{\text{KL}}[p_{i|j}||q_{i|j}] = 0$, $\forall p_j \neq 0$. This condition can be restated as $p_{i|j} = q_{i|j}$, \mbox{$\forall p_j \neq 0$} by using Eq.~\eqref{eq:gibbs}. Then, multiplying both sides of the condition by $p_j$, \mbox{$p_{ij} = q_{i|j}p_j$}, completing the proof. 
\end{proof}

\subsection{Proof of Theorem \ref{thm:DPIMI}}
\label{apx:dpimi}
\begin{proof}
Using the relations
\begin{align}
    I(p_{xy}) &= \sum_x p_x D_{\text{KL}}(p_{y|x}||p_y),\\
    I(p_{xz}) &= \sum_x p_x D_{\text{KL}}(p_{z|x}||p_z),
\end{align}
reduces proving $I(p_{xy}) \geq I(p_{xz})$ to proving
\begin{equation}
    D_{\text{KL}}(p_{y|x}||p_y) \geq D_{\text{KL}}(p_{z|x}||p_z),\  \forall x.\label{eq:ineqcond}
\end{equation}
These inequalities hold because of the strong data processing inequality in theorem \ref{thm:dataProcStr}. The hypotheses of this theorem are fulfilled because $X\to Y \to Z$ implies
\begin{equation}
    p_{yz} = p_{z|y} p_y, \ p_{yz|x} = p_{z|y} p_{y|x}.
\end{equation}
Therefore, both $p_y$ and $p_{y|x}$ are evolved by the same transition matrix $p_{z|y}$, which allows to apply theorem \ref{thm:dataProcStr} to prove Eq.~\eqref{eq:ineqcond}. In turn, $I(p_{xy}) \geq I(p_{xz})$ and the equality is obtained if and only if
\begin{equation}
    p_{yz|x} = p_{y|z}p_{z|x},\  \forall p_x\neq 0,
\end{equation}
where $p_{y|z}$ is defined by $p_{zy}= p_{y|z}p_z$. Equivalently, multiplying by $p_x$ on both sides,
\begin{equation}
    p_{xyz} = p_{y|z}p_{xz},
\end{equation}
completing the proof.
\end{proof}

\section{Simplifying the Definition of Coarser when Measurements are Projective}
\label{apx:generalized}
This section proves theorem \ref{thm:projective}, which implies that the new definition of coarser reduces to the earlier definition when applied to projective measurements. The following lemma will be helpful to complete the proof.
\begin{lemma} Given a projector $\hat P$ and a positive operator $\po$,
\begin{equation}
    \Tr[\po\hat P] \geq 0 
    \label{eq:trpositive}
\end{equation}
and the equality case
\begin{equation}
    \Tr[\po\hat P] = 0 \iff  \hat P\po = \po\hat P = 0.
    \label{eq:trequality}
\end{equation}
\label{lemma:projandpo}
\end{lemma}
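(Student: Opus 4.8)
The statement to prove is Lemma~\ref{lemma:projandpo}: for a projector $\hat P$ and a positive operator $\po$, we have $\Tr[\po\hat P]\geq 0$, with equality iff $\hat P\po=\po\hat P=0$.

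\textbf{Plan of the proof.} The plan is to exploit that both $\hat P$ and $\po$ are positive semidefinite, so their product has nonnegative trace, and then to extract the equality condition by writing the trace as a sum of manifestly nonnegative terms using square roots. First I would write $\hat P = \hat P^2 = \hat P^\dagger \hat P$ since $\hat P$ is a projector, and let $\sqrt{\po}$ denote the unique positive square root of the positive operator $\po$. Then by cyclicity of the trace,
\begin{equation}
    \Tr[\po\hat P] = \Tr[\sqrt{\po}\sqrt{\po}\,\hat P\hat P] = \Tr[\hat P\sqrt{\po}\sqrt{\po}\hat P] = \Tr[(\sqrt{\po}\hat P)^\dagger(\sqrt{\po}\hat P)] = \lVert \sqrt{\po}\hat P\rVert_{\text{HS}}^2 \geq 0,
\end{equation}
where $\lVert\cdot\rVert_{\text{HS}}$ is the Hilbert--Schmidt norm. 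This establishes Eq.~\eqref{eq:trpositive} immediately.

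\textbf{Equality case.} For Eq.~\eqref{eq:trequality}, the $\Leftarrow$ direction is trivial: if $\po\hat P=0$ then $\Tr[\po\hat P]=0$. For $\Rightarrow$, note that $\Tr[(\sqrt{\po}\hat P)^\dagger(\sqrt{\po}\hat P)]=0$ forces $\sqrt{\po}\hat P=0$, because the Hilbert--Schmidt norm is a genuine norm (a positive operator with zero trace is the zero operator). Then multiplying $\sqrt{\po}\hat P=0$ on the left by $\sqrt{\po}$ gives $\po\hat P=0$, and taking the adjoint (using $\hat P^\dagger=\hat P$, $\po^\dagger=\po$) gives $\hat P\po=0$ as well. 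This closes the equivalence.

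\textbf{Main obstacle.} There is no serious obstacle here; the only thing to be careful about is justifying the step $\Tr[A^\dagger A]=0\implies A=0$, which I would state as the faithfulness/positive-definiteness of the Hilbert--Schmidt inner product (equivalently, $\Tr[A^\dagger A]=\sum_{kl}|A_{kl}|^2$ in any orthonormal basis, a sum of nonnegative terms). The other mild point is the existence and properties of $\sqrt{\po}$, which is standard for positive operators on a Hilbert space; in finite dimensions this is just diagonalization. If one prefers to avoid square roots entirely, an alternative is to diagonalize $\po=\sum_k \lambda_k\ketbra{e_k}{e_k}$ with $\lambda_k\geq 0$ and write $\Tr[\po\hat P]=\sum_k\lambda_k\bra{e_k}\hat P\ket{e_k}=\sum_k\lambda_k\lVert\hat P\ket{e_k}\rVert^2\geq 0$; equality then forces $\hat P\ket{e_k}=0$ for every $k$ with $\lambda_k\neq 0$, hence $\hat P\po=0$ and, by adjoints, $\po\hat P=0$. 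Either route is short; I would present the square-root version for compactness.
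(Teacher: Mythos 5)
Your proof is correct, and it reaches the same conclusion by a slightly different technical route. The paper diagonalizes the \emph{projector}, writing $\hat P = \sum_x \ketbra{x}{x}$ so that $\Tr[\po\hat P]=\sum_x\braket{x|\po|x}\geq 0$, and then handles the equality case by invoking a Kraus decomposition $\po=\sum_l \hat K_l^\dagger\hat K_l$ to conclude $\hat K_l\ket{x}=0$ for each eigenvector, hence $\po\hat P=\hat P\po=0$. You instead keep $\hat P$ intact, use $\hat P=\hat P^2$ and the positive square root $\sqrt{\po}$ to write $\Tr[\po\hat P]=\|\sqrt{\po}\hat P\|_{\mathrm{HS}}^2$, and read off both the inequality and the equality condition from faithfulness of the Hilbert--Schmidt norm; your alternative (diagonalizing $\po$ rather than $\hat P$) is the mirror image of the paper's argument. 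The underlying idea is identical --- express the trace as a sum of nonnegative terms and force each to vanish --- but your square-root version is more compact and gets $\hat P\po=0$ in one line by taking adjoints, whereas the paper's Kraus-operator route has the minor advantage of never invoking the operator square root and of matching the formalism already used elsewhere in the text. Both are complete and correct; the one point you rightly flag, $\Tr[A^\dagger A]=0\implies A=0$, is adequately justified by your basis expansion.
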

\begin{proof}
To start with, consider the decomposition of the projector into orthogonal eigenvectors $\hat P = \sum_x \ketbra{x}{x}$. Then,
\begin{equation}
    \Tr[\po\hat P] = \sum_x \braket{x|\po|x} \geq 0
\end{equation}
because $\po$ is positive, proving the inequality. 

In the equality case,
\begin{equation}
    \sum_x \braket{x|\po|x} = 0 \implies \braket{x|\po|x} = 0,
\end{equation}
again due to $\po$ being positive. Now, decompose $\po$ into Kraus operators $\hat K_l$,
\begin{equation}
    \po = \sum_l \hat K_l^\dagger \hat K_l.
\end{equation}
And substituting this decomposition into the previous equation,
\begin{align}
    0 = \braket{x|\po|x} = \braket{x|\sum_l\hat K_l^\dagger \hat K_l|x} = \sum_l \|\hat K_l\ket{x}\|^2,
\end{align}
which implies $\hat K_l\ket{x} = 0$. Therefore, $\hat K_l^\dagger \hat K_l\ketbra{x}{x} = 0$ and $\ketbra{x}{x}\hat K_l^\dagger \hat K_l = 0$, and adding over all $l$ and $x$, $\po\hat P = 0$ and $\hat P\po=0$. In summary, 
\begin{equation}
    \Tr[\po\hat P] = 0 \implies \hat P\po = \po\hat P = 0,
\end{equation}
and the converse is trivial, because $\Tr[0] = 0$.
\end{proof}

\subsection{Proof of Theorem \ref{thm:projective}}

\begin{proof} $\implies)$
Let $i$ be an outcome of $\mb$ and $j$ an outcome of $\ma$ such that $\pob_{i}\Pa_{j}\neq 0$. This $j$ must exist, otherwise,
\begin{equation}
    \pob_{i}=\pob_{i} \id = \pob_{i} \sum_{j'}\Pa_{j'} = 0,
\end{equation}
which is false. Next, consider the hypothesis $\ma \coarser \mb$, which implies
\begin{align}
    \Pa_{j'} &= \sum_{i'} P_{j'i'} \pob_{i'}.
    \label{eq:hypcoarser}
\end{align}
Combining this equation with $\Pa_{\bar j}\Pa_j = 0$ if $\bar j\neq j$, which is true because the projectors that form measurements are orthogonal, 
\begin{align}
    0 &= \Tr[\Pa_{\bar j}\Pa_j] \nonumber\\
    &= \sum_{i'} P_{\,\bar ji'} \Tr[\pob_{i'}\Pa_{j}] \nonumber\\
    &\geq  P_{\,\bar ji} \Tr[\pob_i\Pa_{j}].
    \label{eq:ineqPij}
\end{align}
Where the last line was obtained discarding non-negative terms, according to Eq.~\eqref{eq:trpositive} of the previous lemma and $P_{\,\bar ji'}\geq0$. The same lemma, together with $\pob_{i}\Pa_{j}\neq 0$, implies $\Tr[\pob_{i}\Pa_{j}]>0$. Therefore, $P_{\,\bar ji} = 0$ to fulfill the inequality in Eq.~\eqref{eq:ineqPij}. Since the value of $\bar j$ was not specified, $P_{\,\bar ji} = 0$, $\forall \bar j \neq j$. In other words, there is only one $j'$ for which $P_{j'i'} \neq 0$. Consequently, the sets
\begin{equation}
    I_{j'} = \{i'\,|\, P_{j'i'} \neq 0\}
\end{equation}
are disjoint. Moreover, $i\in I_{j}$ implies
\begin{equation}
    P_{ji} = \sum_{j'} P_{j'i} = 1,
\end{equation}
since $P_{\bar ji} = 0$, $\forall \bar j \neq j$. Then, the hypothesis $\ma \coarser \mb$ becomes
\begin{equation}
    \Pa_j = \sum_i P_{ji} \pob_i = \sum_{i\in I_{j}} P_{ji} \pob_i = \sum_{i\in I_{j}} \pob_i.
\end{equation}
Since $I_{j}$ are disjoint, the $\implies$ implication is proven. 

$\impliedby)$ Building a left stochastic $P$ for which $\ma \coarser \mb$ is enough to prove the converse. Let $P_{ji} = 1$ if $i\in I_{j}$ and $P_{ji}=0$ otherwise. Then, the proof would finish if for each $i$ existed $j$ such that $i\in I_{j}$, because
\begin{equation}
    \sum_{j'} P_{j'i} = P_{ji} = 1,
\end{equation}
as needed. To prove that such $j$ actually exists is equivalent to prove that $\bigcup_j I_{j}$ are all the outcomes of $\mb$. Consider
\begin{equation}
    \id = \sum_j \Pa_j = \sum_j \sum_{i\in I_{j}} \pob_i = \sum_{i\in \bigcup_j I_{j}} \pob_i,
\end{equation}
where the last equality is true thanks to the $I_{j}$ being disjoint. Finally, 
\begin{equation}
    0 = \id - \sum_{i\in \bigcup_j I_{j}} \pob_i = \sum_{i \not \in \bigcup_j I_{j}} \pob_i,
\end{equation}
where we used that $I_{j}$ are disjoint again and the completeness relation for the $\{\pob_i\}$. Since $\pob_i$ are positive, if $i \not \in \bigcup_j I_{j}$ then $\pob_i=0$, which is not allowed. Therefore all $i\in \bigcup_j I_{j}$ and from the previous reasoning, $P$ is left stochastic and $\ma \coarser \mb$. Therefore proving the remaining implication 
\begin{equation}
    \ma \coarser \mb \impliedby \Pa_j = \sum_{i\in I_{j}} \pob_i,
\end{equation}
with $I_{j}$ disjoint sets.
\end{proof}

\section{Proving the Properties of Coarser Measurements}
\label{apx:coarser}
Here are the proofs of section \ref{sec:CoarserAndProcessing} regarding the properties of coarser measurements.
\subsection{Proof of Theorem \ref{thm:CoarserObservational}}
\label{apx:CoarserObservational}
\begin{proof}
This result follows from theorem \ref{thm:observationalmonotonous}, whose hypothesis are fulfilled because of lemma \ref{lemma:processingmakescoarser}. According to this lemma, $\ma\coarser\mb$ is equivalent to
\begin{equation}
        \pa_j = \sum_i P_{ji}\ \pb_i,\  \forall\hat\rho.
\end{equation} 
In particular, taking the state $\hat\rho^{\text{id}}$, whose probabilities are proportional to the volumes,
\begin{equation}
        \Va_j = \sum_i P_{ji}\ \Vb_i. \label{eq:volrelproof}
\end{equation} 
Then, using the last two equations, theorem \ref{thm:observationalmonotonous} implies 
\begin{equation}
    S_{\ma}(\hat\rho)\geq S_{\mb}(\hat\rho),\  \forall\hat\rho,
\end{equation}
with equality for each $\hat\rho$ if and only if 
\begin{equation}
    P_{ji}\pb_i\Va_j=P_{ji}\Vb_i\pa_j,
\end{equation}
Then, assuming $\Va_j>0$,
\begin{equation}
    \pb_i = \sum_j P_{ji} \pb_i = \sum_j P_{ji}\frac{\Vb_i}{\Va_j} \pa_j = \sum_j \tilde P_{ij} \pa_j,\  \forall\hat\rho,\label{eq:reverse}
\end{equation}
where $\tilde P_{ij}$ is left stochastic, because
\begin{itemize}
    \item $\tilde P_{ij}\geq 0$, due to \mbox{$P_{ji},\Vb_i,\Va_j\geq0$}.
    \item $\sum_i \tilde P_{ij} =\sum_i P_{ji}\frac{\Vb_i}{\Va_j} = 1$, due to Eq.~\eqref{eq:volrelproof}.
\end{itemize}
Finally, Eq.~\eqref{eq:reverse} is equivalent to \mbox{$\mb\coarser\ma$}, again due to lemma \ref{lemma:processingmakescoarser}. Therefore, the equality in $S_{\ma}\geq S_{\mb}$ is achieved if and only if $\mb\coarser\ma$.
\end{proof}

\subsection{Proof of Theorem \ref{thm:CoarserProcessing}}
\label{apx:CoarserProcessing}
\begin{proof}
This proof consists of defining a Markov chain $X\to I \to J$ with adequate marginals $\pa_{xj}$ and $\pb_{xi}$ and then applying theorem \ref{thm:DPIMI}. First, define the joint probabilities of $X$, $I$, $J$ as
\begin{equation}
    p_{xij} = P_{ji}\pb_{xi} = P_{ji} \braket{x|\pob_i|x} p_x, \label{eq:chaindef}
\end{equation}
with $P$ the left stochastic matrix provided by \mbox{$\ma\coarser\mb$}. Therefore, $X\to J\to I$ because of \mbox{$p_{xij} = p_{j|i}p_{i|x}p_x$}. 
Moreover, 
\begin{align}
    &p_{xi} = \sum_j P_{ji}\pb_{xi} = \pb_{xi},\nonumber\\
    &p_{xj} = \sum_i P_{ji} \braket{x|\pob_i|x} p_x = \braket{x|\poa_j|x} p_x  = \pa_{xj},
\end{align}
where the second line uses the relation between POVM elements due to $\ma\coarser\mb$. Finally, theorem \ref{thm:DPIMI} implies $I(\pa_{x j})\leq I(\pb_{x i})$, achieving the equality if and only if
\begin{equation}
    p_{xij} = p_{i|j}p_{xj},
\end{equation}
which implies
\begin{align}
    p_{xi} &= \sum_j p_{i|j}p_{xj}.
\end{align}
Now, choosing $\hat\rho = \ketbra{x}{x}$ and using the definition of $p_{xi}$ and $p_{xj}$ given in Eq.~\eqref{eq:jointdef},
\begin{align}
    \braket{x|\pob_i|x}&=\braket{x|\sum_j p_{i|j}\poa_j|x}. \label{eq:expectation}
\end{align}
Since $\ket{x}$ can be any pure state, then
\begin{equation}
    \pob_i = \sum_j p_{i|j}\poa_j,
\end{equation}
which means $\ma\finer\mb$, with the conditional probabilities $p_{i|j}$ forming a left stochastic matrix. Therefore, achieving the equality in $I(\pa_{x j})\leq I(\pb_{x i})$ implies that $\ma\finer\mb$. The converse implication is
\begin{equation}
    \ma\finer\mb,\ \ma\coarser\mb \implies I(\pa_{x j})=I(\pb_{x i}),\  \forall\hat\rho,
\end{equation}
which is also true because of the already proven Eq.~\eqref{eq:lessinfo}.
\end{proof}

\section{Proving the Properties of Coarser in Subspaces}
\label{apx:subspaces}
This appendix contains the proofs that are not included in section \ref{sec:CoarserAndProcessing} regarding the properties of coarser in subspaces.

\subsection{Proof of Lemma \ref{lemma:processingmakescoarsersub}}
\label{apx:processingmakescoarsersub}
$(\ref{item:coarsersub})\implies (\ref{item:probrelsub})$. The condition $\ma\pcoarser{\mathcal{G}}\mb$ implies that $\forall j\in \oa(\mathcal{G})$,
\begin{align}
      \pa_j &=\sum_{i\in \ob(\mathcal{G})} \widetilde P_{ji}\pb_i ,\ \forall \hat\rho\in \mathcal{D}(\mathcal{G}) \label{eq:coarsersubA},\\
      \Va_j &\geq \sum_{i\in \ob(\mathcal{G})} \widetilde P_{ji} \Vb_i,\label{eq:coarsersubB}
\end{align}
where the first line came from using that $\hat P_{\mathcal{G}}\hat\rho\hat P_{\mathcal{G}}=\hat\rho$ for  $\hat\rho\in \mathcal{D}(\mathcal{G})$. Moreover, $\widetilde P$ is a left stochastic $\oa(\mathcal{G})\times\ob(\mathcal{G})$ matrix. The next step is to extend $\widetilde P$ into a $P$ which is defined even if $i\not\in\ob(\mathcal{G})$ or $j\not\in\oa(\mathcal{G})$,
\begin{align}
    P_{ji} &=
    \begin{cases}
    \widetilde P_{ji} & i\in \ob(\mathcal{G}), \  j\in \oa(\mathcal{G}), \\
    0 & i\in \ob(\mathcal{G}), \  j\not\in \oa(\mathcal{G}),\\
    C_j &  i\not\in \ob(\mathcal{G}),
    \end{cases} \nonumber\\
    C_j &= \frac{\Va_j - \sum_{i\in \ob(\mathcal{G})} P_{ji} \Vb_i}{\sum_{i\not\in\ob(\mathcal{G})} \Vb_i}.
\end{align}
Let us check that $P$ fulfills the requirements in $(\ref{item:probrelsub})$, starting with $P$ being left stochastic. $P_{ji}\geq0$ because $\widetilde P_{ji}\geq0$ and $C_j\geq0$. In turn, $C_j\geq0$ because of
\begin{itemize}
    \item Eq.~\eqref{eq:coarsersubB}, when $j\in\oa(\mathcal{G})$.
    \item $C_j=\Va_j/\sum_{i\not\in\ob(\mathcal{G})} \Vb_i\geq 0$, when $j\not\in\oa(\mathcal{G})$.
\end{itemize} 
Moreover, the columns of $P$ sum to one,
\begin{equation}
    \sum_j P_{ji} = \sum_j \widetilde P_{ji} = 1,\ \forall i\in\ob(\mathcal{G}),
\end{equation}
\begin{equation}
    \sum_j P_{ji} = \frac{\sum_j\Va_j - \sum_{i\in \ob(\mathcal{G})}\Vb_i}{\sum_{i\not\in\ob(\mathcal{G})} \Vb_i}=1 , \ \forall i\not\in\ob(\mathcal{G}),
\end{equation}
due to $\sum_j \Va_j = \sum_i \Vb_i$, i.e. $V_{\text{tot}}$ remaining constant. Therefore, $P$ is left stochastic. The next requirement is to let $\hat\rho\in\mathcal{D}(\mathcal{G})$ and then show $\pa_j = \sum_i P_{ji}\pb_i$. Using that $\pb_i = 0$ for $i\not\in\ob(\mathcal{G})$,
\begin{equation}
        \sum_i P_{ji}\pb_i = \begin{cases}
        \sum_{i\in \ob(\mathcal{G})} \widetilde P_{ji}\pb_i & j\in \oa(\mathcal{G}), \\
        0 & j\not\in \oa(\mathcal{G}),
    \end{cases}\label{eq:cond1}
\end{equation}
which is equal to $\pa_j$ because of Eq.~\eqref{eq:coarsersubA} when  $j\in\oa(\mathcal{G})$ and $\pa_j = 0$ when $j\not\in\oa(\mathcal{G})$. The requirement for the volumes is also fulfilled because
\begin{align}
    \sum_i P_{ji}\Vb_i &= \sum_{i\in\ob(\mathcal{G})} P_{ji}\Vb_i + C_j\sum_{i\not\in\ob(\mathcal{G})}\Vb_i
    \nonumber\\
    &= \Va_j.
\end{align}
Therefore, $(\ref{item:probrelsub})$ holds, completing the first implication.

$(\ref{item:coarsersub})\impliedby (\ref{item:probrelsub})$. The condition $(\ref{item:probrelsub})$ provides a $P$ fulfilling Eqs.~\eqref{eq:probrelsub} and \eqref{eq:volrelsub}. Denote by $\widetilde P$ the restriction of $P$ to $i\in\ob(\mathcal{G})$, $j\in\oa(\mathcal{G})$. Proving $(\ref{item:coarsersub})\impliedby (\ref{item:probrelsub})$ will consist in showing that $\widetilde P$ fulfills the requirements of $\ma\pcoarser{\mathcal{G}}\mb$. The first step is to prove that $\widetilde P$ is left stochastic. Consider any
$j\not\in\oa(\mathcal{G})$ and use Eq.~\eqref{eq:probrelsub}, $\pb_i=0$ for $i\not\in\ob(\mathcal{G})$ and \mbox{$\pa_j=0$} for $j\not\in\oa(\mathcal{G})$ to obtain
\begin{equation}
    \sum_{i\in\ob(\mathcal{G})} P_{ji} \pb_i = \sum_{i} P_{ji} \pb_i = \pa_j = 0  ,\ \forall\hat\rho\in\mathcal{D}(\mathcal{G}),
\end{equation}
This equation, together with that there is always a \mbox{$\hat\rho\in\mathcal{D}(\mathcal{G})$} that makes $\pb_i\neq0$ when $i\in\ob(\mathcal{G})$ and with that $P_{ji}\geq0$, implies
\begin{equation}
    P_{ji} = 0 ,\  \forall i\in \ob(\mathcal{G}),\   \forall j \not\in \oa(\mathcal{G}).
\end{equation}
This identity in turn implies
\begin{equation}
    \sum_{j\in\oa(\mathcal{G})} \widetilde P_{ji} = 1, \  \forall i\in\ob(\mathcal{G}),
\end{equation}
which means that $\widetilde P$ is left stochastic when joined with the fact that $\widetilde P_{ji}\geq 0$ because $P_{ji}\geq 0$. The POVM elements' condition, Eq.~\eqref{eq:poCond}, is shown starting from
\begin{align}
    \pa_j &= \sum_i P_{ji}\pb_i = \sum_{i\in\ob(\mathcal{G})} P_{ji}\pb_i ,\  \forall\hat\rho\in\mathcal{D}(\mathcal{G}).
\end{align}
Now, rewriting in terms of the POVM elements,
\begin{align}
     &\Tr[\poa_j\hat\rho] = \Tr\Big[\sum_{i\in\ob(\mathcal{G})} P_{ji}\pob_i\hat\rho\Big],\ \forall \hat\rho \in \mathcal{D}(\mathcal{G})\nonumber\\
     &\implies \hat P_{\mathcal{G}}\poa_j\hat P_{\mathcal{G}} = \sum_{i\in\ob(\mathcal{G})} P_{ji}\hat P_{\mathcal{G}}\pob_i \hat P_{\mathcal{G}}.
\end{align}
The requirement over the volumes, Eq.~\eqref{eq:volCond}, is also fulfilled,
\begin{equation}
    \Va_j = \sum_i P_{ji}\Vb_i\geq \sum_{i\in\ob(\mathcal{G})} P_{ji}\Vb_i,
\end{equation}
completing all the requirements for $\ma\pcoarser{\mathcal{G}}\mb$.

Finally, the proof for $(\ref{item:probrelsub})\iff (\ref{item:ItoJsub})$ is almost the same as for lemma \ref{lemma:processingmakescoarser}. These conditions are seen to be equivalent by choosing the same transition matrix $P$ for both. The only difference with the previous proof is using that 
\begin{equation}
    \Va_j = \pa_jV_{\text{tot}}, \ \Vb_i = \pb_iV_{\text{tot}}, \text{ when } \hat\rho = \hat\rho^{\text{id}},
\end{equation}
to obtain Eq.~\eqref{eq:volrelsub} from $(\ref{item:ItoJsub})$ with $\hat\rho^{\text{id}}$ and vice-versa.

\subsection{Proof of Theorem \ref{thm:CoarserObservationalSub}}
\label{apx:CoarserObservationalSub}
\begin{proof}
The proof is almost the same as for theorem \ref{thm:CoarserObservational} after substituting the lemma \ref{lemma:processingmakescoarser} with the corresponding lemma \ref{lemma:processingmakescoarsersub}. The only difference arises when proving that achieving $S_{\ma}=S_{\mb}$ implies $\ma\finer\mb$. The following extra step has to be shown,
\begin{equation}
    \sum_j \tilde P_{ij} \Va_j = \sum_j P_{ji}\frac{\Vb_i}{\Va_j} \Va_j= \Vb_i,
\end{equation}
because this equation is required by item $(\ref{item:probrelsub})$ of lemma \ref{lemma:processingmakescoarsersub}. Besides this addition, the proof proceeds as in theorem \ref{thm:CoarserObservational}.
\end{proof}

\subsection{Proof of Theorem \ref{thm:CoarserProcessingSub}}
\label{apx:CoarserProcessingSub}
\begin{proof}
The proof proceeds analogously to the proof of theorem \ref{thm:CoarserProcessing}. The major difference is the use of lemma \ref{lemma:processingmakescoarsersub} together with $\ma\pcoarser{\mathcal{G}}\mb$ to obtain an adequate $P$. This $P$ is defined for any possible value of $i$ and $j$, allowing to build the Markov chain $X\to I \to J$ with the probabilities defined in Eq.~\eqref{eq:chaindef}. The rest of the proof proceeds without changes except when dealing with the case $I(\pa_{x j})=I(\pb_{x i})$. Since now $\hat\rho \in \mathcal{D}(\mathcal{G})$, the $\hat\rho=\ketbra{x}{x}$ can only be taken with $\ket{x}\in\mathcal{G}$. Therefore, the Eq.~\eqref{eq:expectation} only implies that 
\begin{equation}
    \hat P_{\mathcal{G}}\poa_i\hat P_{\mathcal{G}} =\sum_{j\in \oa(\mathcal{G})} p_{i|j}\hat P_{\mathcal{G}}\pob_j \hat P_{\mathcal{G}} ,\  \forall i\in \ob(\mathcal{G}), \label{eq:reversed}
\end{equation}
as required. This equation not enough to state that $\ma\pfiner{\mathcal{G}}\mb$, because the condition over the volumes is missing. However, it turns out that this condition on the volumes is not needed to prove the inequality between mutual informations. Therefore, Eq.~$\eqref{eq:reversed}$ implies $I(\pa_{x j})\geq I(\pb_{x i})$, which allows to complete the proof.
\end{proof}

\subsection{Proving the Observations in Theorem \ref{thm:preserved}}
\label{apx:observations}
To prove these observations it will be useful to first prove the two lemmas provided below. 
\begin{lemma} The POVM elements of $\m$ outside of $\om(\mathcal{G})$ are null in the subspace $\mathcal{G}$, and vice-versa.
\begin{equation}
    \po_i \hat P_{\mathcal{G}} = 0 \iff i \not\in \om(\mathcal{G}).
\end{equation}
\label{lemma:indexs}
\end{lemma}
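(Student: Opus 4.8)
The plan is to unfold both sides of the claimed equivalence and connect them through Lemma \ref{lemma:projandpo}. By Definition \ref{def:setoutcomes}, the statement $i\notin\om(\mathcal{G})$ means precisely that $\bra{\psi}\po_i\ket{\psi}=0$ for every $\ket{\psi}\in\mathcal{G}$. So the whole task reduces to showing that this vanishing of all diagonal matrix elements of $\po_i$ on $\mathcal{G}$ is equivalent to the operator identity $\po_i\hat P_{\mathcal{G}}=0$.

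For the direction $i\notin\om(\mathcal{G})\implies\po_i\hat P_{\mathcal{G}}=0$, I would fix an orthonormal basis $\{\ket{x}\}$ of $\mathcal{G}$, so that $\hat P_{\mathcal{G}}=\sum_x\ketbra{x}{x}$, and compute
\begin{equation}
    \Tr[\po_i\hat P_{\mathcal{G}}]=\sum_x\bra{x}\po_i\ket{x}=0,
\end{equation}
each term vanishing because $\ket{x}\in\mathcal{G}$ and $i\notin\om(\mathcal{G})$. Applying the equality case of Lemma \ref{lemma:projandpo}, Eq.~\eqref{eq:trequality}, with the projector $\hat P_{\mathcal{G}}$ then gives $\po_i\hat P_{\mathcal{G}}=\hat P_{\mathcal{G}}\po_i=0$. (An equivalent route, mirroring the proof of that lemma, is to write $\po_i=\sum_l\hat K_l^\dagger\hat K_l$, observe $\bra{\psi}\po_i\ket{\psi}=\sum_l\|\hat K_l\ket{\psi}\|^2=0$ forces $\hat K_l\ket{\psi}=0$ for all $\ket{\psi}\in\mathcal{G}$, hence $\hat K_l\hat P_{\mathcal{G}}=0$ and $\po_i\hat P_{\mathcal{G}}=0$.)

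For the converse $\po_i\hat P_{\mathcal{G}}=0\implies i\notin\om(\mathcal{G})$, I would note that any $\ket{\psi}\in\mathcal{G}$ satisfies $\hat P_{\mathcal{G}}\ket{\psi}=\ket{\psi}$, so $\po_i\ket{\psi}=\po_i\hat P_{\mathcal{G}}\ket{\psi}=0$ and in particular $\bra{\psi}\po_i\ket{\psi}=0$; since this holds for every $\ket{\psi}\in\mathcal{G}$, no vector in $\mathcal{G}$ can witness $i\in\om(\mathcal{G})$, i.e. $i\notin\om(\mathcal{G})$. Together the two directions give the biconditional, which also covers the ``vice-versa'' reading of the statement. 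I do not expect any genuine obstacle here: the only point requiring a moment of care is the passage from ``all diagonal elements of $\po_i$ on $\mathcal{G}$ vanish'' to ``$\Tr[\po_i\hat P_{\mathcal{G}}]=0$'', which is immediate once $\hat P_{\mathcal{G}}$ is written in an orthonormal basis of $\mathcal{G}$, after which Lemma \ref{lemma:projandpo} does the rest.
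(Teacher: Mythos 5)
Your proof is correct and follows essentially the same route as the paper's: the forward direction via $\po_i\ket{\psi}=\po_i\hat P_{\mathcal{G}}\ket{\psi}=0$, and the reverse direction by writing $\hat P_{\mathcal{G}}$ in an orthonormal basis of $\mathcal{G}$, concluding $\Tr[\po_i\hat P_{\mathcal{G}}]=0$, and invoking the equality case of Lemma \ref{lemma:projandpo}. The alternative Kraus-operator route you mention is just the proof of that lemma inlined, so nothing substantively different.
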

\begin{proof}
Let $\m$ be a measurement and $\mathcal{G}$ a subspace of the Hilbert space. First, 
\begin{equation}
    \po_i \hat P_{\mathcal{G}} = 0 \implies i \not\in \om(\mathcal{G}).
\end{equation}
because 
\begin{align}
    &\po_i\ket{\psi} = \po_i\hat P_{\mathcal{G}}\ket{\psi} = 0 ,\ \forall \ket{\psi}\in \mathcal{G} \nonumber \\ 
    &\implies p_i = 0,\  \forall \ket{\psi}\in \mathcal{G},
\end{align}
which means $i \not\in \om(\mathcal{G})$.

Second, let us prove
\begin{equation}
    \po_i \hat P_{\mathcal{G}} = 0 \impliedby i \not\in \om(\mathcal{G}).
\end{equation}
Let \mbox{$\hat P_{\mathcal{G}} = \sum_x \ketbra{x}{x}$} be the decomposition in eigenvectors. Then, 
\begin{equation}
    \Tr[\po\hat P_{\mathcal{G}}] = \sum_x \braket{x|\po_i|x} = 0,
\end{equation}
where we used that $\ket{x}\in\mathcal{G}$ and since $i \not\in \om(\mathcal{G})$, then \mbox{$\braket{x|\po_i|x} = 0$}. Finally, since $\Tr[\po\hat P_{\mathcal{G}}] = 0$, the previous lemma \ref{lemma:projandpo} implies $\po\hat P_{\mathcal{G}} = 0$, completing the proof.
\end{proof}

\begin{lemma} Some entries in the left stochastic matrix might be zero due to $\oa$ and $\ob$. Let $\mathcal{F}$ and $\mathcal{G}$ be subspaces of $\mathcal{H}$. Consider 
\begin{align}
    \tilde J &= \oa(\mathcal{G}) \setminus \oa(\mathcal{F}), \\
    I^\cap &= \ob(\mathcal{G}) \cap \ob(\mathcal{F}).
\end{align}
with $X\setminus Y$ the elements of $X$ that do not belong to $Y$. For any left stochastic matrix $P$ and $j\in \tilde J$,
\begin{equation}
   \Tr[\hat R_j\hat P_{\mathcal{F}}]\geq 0  \implies P_{ji} = 0\   i \in I^\cap, \ \hat R_j\hat P_{\mathcal{F}} = \hat P_{\mathcal{F}}\hat R_j = 0,
\end{equation}
where
\begin{equation}
    \hat R_j = \poa_j - \sum_{i\in \ob(\mathcal{G})} P_{ji} \pob_i.
\end{equation}
\label{lemma:ZeroPij}
\end{lemma}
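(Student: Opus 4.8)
The plan is to restrict $\hat R_j$ to the subspace $\mathcal{F}$: I claim that $\hat P_{\mathcal{F}}\hat R_j\hat P_{\mathcal{F}}$ is a self-adjoint negative semidefinite operator, so the sign hypothesis on $\Tr[\hat R_j\hat P_{\mathcal{F}}]$ will pin it down to zero, and both conclusions will then follow from the positivity lemmas. First I would use that $j\in\tilde J$ forces $j\notin\oa(\mathcal{F})$, so Lemma \ref{lemma:indexs} gives $\poa_j\hat P_{\mathcal{F}}=0$ and hence $\hat R_j\hat P_{\mathcal{F}}=-\sum_{i\in\ob(\mathcal{G})}P_{ji}\,\pob_i\hat P_{\mathcal{F}}$. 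Sandwiching by $\hat P_{\mathcal{F}}$ and using $\hat P_{\mathcal{F}}^2=\hat P_{\mathcal{F}}$, the operator $\hat P_{\mathcal{F}}\hat R_j\hat P_{\mathcal{F}}=-\sum_{i\in\ob(\mathcal{G})}P_{ji}\,\hat P_{\mathcal{F}}\pob_i\hat P_{\mathcal{F}}$ is minus a nonnegative combination (the coefficients $P_{ji}\ge 0$ because $P$ is left stochastic) of the positive semidefinite operators $\hat P_{\mathcal{F}}\pob_i\hat P_{\mathcal{F}}$, hence negative semidefinite; it is self-adjoint because $\poa_j$, $\pob_i$ are positive operators and $P_{ji}$ is real, so $\hat R_j$ is self-adjoint.

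Next, by cyclicity of the trace and $\hat P_{\mathcal{F}}^2=\hat P_{\mathcal{F}}$, the hypothesis $\Tr[\hat R_j\hat P_{\mathcal{F}}]\ge 0$ reads $\Tr[\hat P_{\mathcal{F}}\hat R_j\hat P_{\mathcal{F}}]\ge 0$; a self-adjoint negative semidefinite operator with nonnegative trace has all eigenvalues zero, so $\hat P_{\mathcal{F}}\hat R_j\hat P_{\mathcal{F}}=0$, i.e. $\sum_{i\in\ob(\mathcal{G})}P_{ji}\,\hat P_{\mathcal{F}}\pob_i\hat P_{\mathcal{F}}=0$. A sum of positive semidefinite operators vanishes only if every summand does, so $P_{ji}\,\hat P_{\mathcal{F}}\pob_i\hat P_{\mathcal{F}}=0$ for each $i\in\ob(\mathcal{G})$. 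For $i\in I^\cap$ we have $i\in\ob(\mathcal{F})$, so $\pob_i\hat P_{\mathcal{F}}\ne 0$ by Lemma \ref{lemma:indexs}, and therefore $\hat P_{\mathcal{F}}\pob_i\hat P_{\mathcal{F}}\ne 0$ as well (if it were zero then $\Tr[\pob_i\hat P_{\mathcal{F}}]=0$, which by Lemma \ref{lemma:projandpo} would give $\pob_i\hat P_{\mathcal{F}}=0$); hence $P_{ji}=0$ for every $i\in I^\cap$. For the operator identities, fix any $i\in\ob(\mathcal{G})$: if $P_{ji}=0$ then $P_{ji}\pob_i\hat P_{\mathcal{F}}=0$ trivially, and if $P_{ji}\ne 0$ then $\hat P_{\mathcal{F}}\pob_i\hat P_{\mathcal{F}}=0$, so $\Tr[\pob_i\hat P_{\mathcal{F}}]=0$ and Lemma \ref{lemma:projandpo} gives $\pob_i\hat P_{\mathcal{F}}=0$, so again $P_{ji}\pob_i\hat P_{\mathcal{F}}=0$. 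Summing over $i\in\ob(\mathcal{G})$ and recalling $\poa_j\hat P_{\mathcal{F}}=0$ yields $\hat R_j\hat P_{\mathcal{F}}=0$; taking the adjoint, using that $\hat R_j$ and $\hat P_{\mathcal{F}}$ are self-adjoint, gives $\hat P_{\mathcal{F}}\hat R_j=0$.

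The step requiring the most care is the passage from the trace inequality to the operator equation $\hat P_{\mathcal{F}}\hat R_j\hat P_{\mathcal{F}}=0$: this rests on $\hat P_{\mathcal{F}}\hat R_j\hat P_{\mathcal{F}}$ being genuinely negative semidefinite, which is exactly where the hypotheses $j\in\tilde J$ (through Lemma \ref{lemma:indexs}, which kills the $\poa_j$ contribution on $\mathcal{F}$) and $P$ left stochastic (which gives $P_{ji}\ge 0$) enter. Once that is in place, the remainder is the repeated use of Lemma \ref{lemma:projandpo} to upgrade statements of the form $\hat P_{\mathcal{F}}\pob_i\hat P_{\mathcal{F}}=0$ to $\pob_i\hat P_{\mathcal{F}}=0$, together with elementary bookkeeping with nonnegative coefficients and positive operators.
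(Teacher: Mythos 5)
Your proof is correct and follows essentially the same route as the paper's: both use Lemma \ref{lemma:indexs} to kill the $\poa_j$ contribution on $\mathcal{F}$, then exploit the nonnegativity of $P_{ji}$ and of $\Tr[\pob_i\hat P_{\mathcal{F}}]$ against the sign hypothesis, and finish with Lemma \ref{lemma:projandpo}. Your only deviations are organizational — you sandwich with $\hat P_{\mathcal{F}}$ and argue via negative semidefiniteness where the paper takes the trace directly, and you get $\hat P_{\mathcal{F}}\hat R_j=0$ by taking adjoints rather than rerunning the argument — but the substance is identical.
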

\begin{proof}
Let $j\in \tilde J$. Then, $j \not\in \oa(\mathcal{F})$ and applying the lemma \ref{lemma:indexs},
\begin{equation}
    \poa_j \hat P_{\mathcal{F}} = 0.
\end{equation}
Therefore, from the definition of $\hat R_j$,
\begin{align}
    \hat R_j \hat P_{\mathcal{F}} = - \sum_{i\in \ob(\mathcal{G})} P_{ji}  \pob_i \hat P_{\mathcal{F}}\nonumber\\
    = - \sum_{i\in I^\cap} P_{ji}  \pob_i \hat P_{\mathcal{F}}
    \label{eq:projectF}
\end{align}
where the lemma \ref{lemma:indexs} was applied again to show that $\pob_i\hat P_{\mathcal{F}} = 0$, $\forall i\not\in \ob(\mathcal{F})$. Taking the trace, 
\begin{align}
    0&\leq -\sum_{i\in I^\cap} P_{ji} \Tr[\pob_i \hat P_{\mathcal{F}}],  \label{eq:desigual}
\end{align}
by using the hypothesis $\Tr[\hat R_j\hat P_{\mathcal{F}}]\geq 0$. Finally, since $i\in I^\cap\subseteq \ob(\mathcal{F})$, the lemma \ref{lemma:indexs} implies $\pob_{i} \hat P_{\mathcal{F}}\neq 0$. From here, lemma \ref{lemma:projandpo} implies $\Tr[\pob_{i} \hat P_{\mathcal{F}}] > 0$. Therefore, since all $P_{ji}\geq0$, the only option to fulfill Eq.~\eqref{eq:desigual} for $j\in\tilde J$ is 
\begin{equation}
    P_{ji} = 0, \   \forall i \in I^\cap,
\end{equation}
which was one of the objectives of the proof. This result, in combination with Eq.~\eqref{eq:projectF} implies $\hat R_j\hat P_{\mathcal{F}} = 0$. The proof for $\hat P_{\mathcal{F}}\hat R_j = 0$ is the same, but starting with $\hat P_{\mathcal{F}}\poa_j = 0$.
\end{proof}
Let us proceed to prove the observations in theorem \ref{thm:preserved}.
\begin{proof}
Let $\mathcal{F}\subseteq\mathcal{G}$ be two subspaces of the Hilbert space. 

First, let us prove that $\om(\mathcal{F})\subseteq \om(\mathcal{G})$ for any $\m$,
\begin{align}
    j\in \om(\mathcal{F}) &\iff \exists \ket{\psi} \in \mathcal{F}\subseteq\mathcal{G} \text{ s.t. } p_j \neq 0\nonumber \\ 
    &\implies j\in \om(\mathcal{G}),
\end{align}
where the definition \ref{def:setoutcomes} was used. 

Second, assume $\ma\overset{\mathcal{G}}{\coarser}\mb$, with a corresponding left stochastic matrix $P^{\mathcal{G}}$. Then, as already discussed in the body of the article, $\ma\overset{\mathcal{F}}{\coarser}\mb$ is true. Now, the objective is to prove that $P^{\mathcal{F}}$, defined as 
\begin{equation}
    P^{\mathcal{F}}_{ji} = P^{\mathcal{G}}_{ji} ,\  i\in \ob(\mathcal{F}),\  j\in \oa(\mathcal{F}), 
\end{equation}
is a left stochastic matrix that fulfills
\begin{align}
    \hat P_\mathcal{F} \poa_i \hat P_\mathcal{F}&= \sum_{i\in \ob(\mathcal{F})} P^{\mathcal{F}}_{ji} \hat P_\mathcal{F}\pob_i\hat P_\mathcal{F} ,\  j\in \oa(\mathcal{F}),\label{eq:PFp}\\
    \Va_j &\geq \sum_{i\in \ob(\mathcal{F})} P_{ji} \Vb_i,\  \forall j\in \oa(\mathcal{F}).\label{eq:PFV}
\end{align}
Start with $\ma\overset{\mathcal{G}}{\coarser}\mb$, which provides 
\begin{align}
    \hat P_\mathcal{G} \poa_j \hat P_\mathcal{G}&= \sum_{i\in \ob(\mathcal{G})} P^{\mathcal{G}}_{ji} \hat P_\mathcal{G}\pob_i\hat P_\mathcal{G} ,\  j\in \oa(\mathcal{G}),\label{eq:PGp}\\
    \Va_j &\geq \sum_{i\in \ob(\mathcal{G})} P_{ji} \Vb_i ,\  \forall j\in \oa(\mathcal{G}).\label{eq:PGV}
\end{align}
Multiplying by $\hat P_\mathcal{F}$ by the left and by the right on Eq.~\eqref{eq:PGp} produces Eq.~\eqref{eq:PFp}, using that
\begin{equation}
    \mathcal{F}\subseteq\mathcal{G} \implies \hat P_\mathcal{F}\hat P_\mathcal{G}= \hat P_\mathcal{G}\hat P_\mathcal{F}=\hat P_\mathcal{F}\label{eq:projinclosos},
\end{equation}
together with $\pob_i\hat P_\mathcal{F}=0$, $\forall i\in\ob(\mathcal{G})\setminus\ob(\mathcal{F})$ by lemma \ref{lemma:indexs} and that $\oa(\mathcal{F})\subseteq \oa(\mathcal{G})$. Moreover, Eq.~\eqref{eq:PGV} converts to Eq.~\eqref{eq:PFV} by removing non-negative terms on the right hand side and considering $\oa(\mathcal{F})\subseteq \oa(\mathcal{G})$. 

The last step is to show that $P^{\mathcal{F}}$ is left stochastic, which reduces to seeing
\begin{equation}
    \sum_{j\in \oa(\mathcal{F})} P^{\mathcal{F}}_{ji} = 1, \  i\in  \ob(\mathcal{F}).
\end{equation}
This identity can be derived applying lemma \ref{lemma:ZeroPij} to $P^{\mathcal{G}}$. The hypothesis of the lemma is fulfilled because of Eq.~\eqref{eq:PGp} and Eq.~\eqref{eq:projinclosos}. The consequence is that
\begin{equation}
    P_{ji}^{\mathcal{G}} = 0, \ j\in\tilde{J},\ i \in I^\cap, 
\end{equation}
where $\tilde J = \oa(\mathcal{G})\setminus \oa(\mathcal{F})$ and $I^\cap = \ob(\mathcal{F})$. Therefore,
\begin{equation}
    \sum_{j\in \oa(\mathcal{F})} P^{\mathcal{F}}_{ji} = \sum_{j\in \oa(\mathcal{G})} P^{\mathcal{G}}_{ji} = 1, \ i\in  \ob(\mathcal{F}),
\end{equation}
completing the proof that $P^{\mathcal{F}}$ defined as the restriction of $P^{\mathcal{G}}$ is a valid left stochastic matrix for $\ma\overset{\mathcal{F}}{\coarser}\mb$.
\end{proof}

\subsection{Proving that Coarseness is not Preserved under Sum of Subsets}
\label{apx:counterexempleSum}
To prove that 
\begin{equation}
    \ma \overset{\mathcal{G}}{\coarser}\mb,\ \ma \overset{\mathcal{F}}{\coarser}\mb \centernot\implies \ma \pcoarser{\mathcal{G}+\mathcal{F}}\mb,\label{eq:noSimpleSum}
\end{equation}
it is enough to provide a counterexample. The counterexample uses measurements on a two dimensional system, with
\begin{align}
    &\poa_1=\ketbra{+}{+},\quad \poa_2=\ketbra{-}{-}, \nonumber\\
    &\pob_1=\ketbra{0}{0},\quad \pob_2=\ketbra{1}{1},
\end{align}
with $\ket{\pm} = (\ket{0} \pm \ket{1})/\sqrt{2}$. Let
\begin{equation}
    \mathcal{G}=\Span(\ket{0}),\quad \mathcal{F}=\Span(\ket{1}).
\end{equation}
Then, $\ma \overset{\mathcal{G}}{\coarser}\mb$ because 
\begin{align}
    \hat P_{\mathcal{G}}\poa_1\hat P_{\mathcal{G}} &= \frac{1}{2}\ketbra{0}{0} = \frac{1}{2}\hat P_{\mathcal{G}}\pob_1\hat P_{\mathcal{G}},
\end{align}
and $1=\Va_1\geq\frac{1}{2}\Vb_1=\frac{1}{2}$. Analogously, $\ma \overset{\mathcal{F}}{\coarser}\mb$ because 
\begin{align}
    \hat P_{\mathcal{G}}\poa_2\hat P_{\mathcal{G}} &= \frac{1}{2}\ketbra{0}{0} = \frac{1}{2}\hat P_{\mathcal{G}}\pob_1\hat P_{\mathcal{G}},
\end{align}
and $1=\Va_2\geq\frac{1}{2}\Vb_1=\frac{1}{2}$. However, \mbox{$\mathcal{F}+\mathcal{G} = \mathcal{H}$}. Then, $\ma \pcoarser{\mathcal{F}+\mathcal{G}}\mb$ would require the projectors of $\ma$ to be written as a sum of the projectors of $\mb$, which is impossible. Therefore, $\ma \pcoarser{\mathcal{F}+\mathcal{G}}\mb$ is false, even though $\ma \overset{\mathcal{G}}{\coarser}\mb$ and $\ma \overset{\mathcal{F}}{\coarser}\mb$ are true, providing the counterexample that proves Eq.~\eqref{eq:noSimpleSum}.

\subsection{Proving that $P^{\mathcal{F}}$ Cannot Always be Extended}
\label{apx:counterexempleExtension}
This section provides a counterexample to show that even if $\mathcal{F}\subseteq\mathcal{G},\,\ma\pcoarser{\mathcal{G}}\mb,$ the left stochastic matrix $P^{\mathcal{F}}$ of $\ma\pcoarser{\mathcal{F}}\mb$ cannot be extended to a $P^{\mathcal{G}}$ in general.

Consider a two dimensional system with Hilbert space $\mathcal{G}=\Span(\ket{0},\ket{1})$, and the subset $\mathcal{F} = \Span(\ket{+})$, with $\ket{+} = (\ket{0} + \ket{1})/\sqrt{2}$. The measurement
$\m$ with 
\begin{equation}
    \po_1=\ketbra{0}{0},\quad \po_2=\ketbra{1}{1}, 
\end{equation}
fulfills $\m\pcoarser{\mathcal{F}}\m$ with
\begin{equation}
    P^{\mathcal{F}} = 
    \begin{bmatrix}
        0 & 1\\
        1 & 0 
    \end{bmatrix}.
\end{equation}
This $P^{\mathcal{F}}$ is adequate because
\begin{equation}
    \ketbra{+}{+}\po_1\ketbra{+}{+} = \ketbra{+}{+}\po_2\ketbra{+}{+} = \frac{1}{2}\ketbra{+}{+},
\end{equation}
which makes the condition about the POVM elements, Eq.~\eqref{eq:poCond}, true. In addition, the condition on the volumes Eq.~\eqref{eq:volCond} is readily verified using $V_1=V_2=1$. Therefore, $P^{\mathcal{F}}$ is adequate for $\m\pcoarser{\mathcal{F}}\m$. However, $P^{\mathcal{F}}$ cannot be extended to any $P^{\mathcal{G}}$ to show $\m\pcoarser{\mathcal{G}}\m$. The only possible choice would be $P^{\mathcal{G}} = P^{\mathcal{F}}$, which does not work because 
\begin{equation}
    \po_1 = \po_2,\ \po_2=\po_1,
\end{equation}
which is false. Simultaneously, $\m\pcoarser{\mathcal{G}}\m$ is true, with 
\begin{equation}
    P^{\mathcal{G}} = 
    \begin{bmatrix}
        1 & 0\\
        0 & 1 
    \end{bmatrix},
\end{equation}
completing the counterexample.

\bibliography{bibliography}

\end{document}